    \newtheorem{lemma}{Lemma}
    \newtheorem{definition}{Definition}
    \newtheorem{property}{Property}
    \newtheorem{remark}{Remark}
    \newtheorem{theorem}{Theorem}
    \newtheorem{example}{Example}
\DeclareFontFamily{OT1}{pzc}{} 
\DeclareFontShape{OT1}{pzc}{m}{it}{<-> s * [1.10] pzcmi7t}{}
\DeclareMathAlphabet{\mathpzc}{OT1}{pzc}{m}{it}
\def\set#1#2{\{ \, #1 \,:\,#2\,\}} 
\newcommand{\cc}[1]{{\mathcal{#1}}} 
\def\R{ {\rm \,I\!R} } 
\def\N{{\mathbb{Z}}} 
\def\inR#1{\in\R^{#1}} 
\def\vv#1{{ \rm \bf{#1}}} 
\def\Dp#1{\mathcal{D}_{+}^{#1}} 
\def\Sp#1{\mathcal{S}_+^{#1}} 
\def\ri#1{{\rm ri}(#1)} 
\newcommand{\T}{^\top} 
\def\sp#1#2{\langle #1,#2\rangle } 
\def\moveEq#1{{}\mkern#1mu} 
\def\Sum#1#2{\sum\limits_{#1}^{#2}} 
\newcommand{\numeq}[2][=]{\stackrel{\scriptstyle(\mkern-1.5mu#2\mkern-1.5mu)}{#1}} 
\DeclareMathSymbol{\shortminus}{\mathbin}{AMSa}{"39} 
\def\bmat#1{\left[\begin{array}{#1}} 
\def\emat{\end{array}\right]} 
\def\bv {\bmat{c} } 
\def\ev {\emat} 
\newcommand {\bsis} {\left\{ \begin{array} }
\newcommand {\esis} {\end{array}\right.}
\newcommand{\bRlist}{\renewcommand{\labelenumi}{(\roman{enumi})} \begin{enumerate}} 
\newcommand{\eRlist}{\end{enumerate} \renewcommand{\labelenumi}{\arabic{enumi}}} 
\def\bx{\bar{x}}
\def\bu{\bar{u}}
\def\xe{x_e}
\def\xs{x_s}
\def\xc{x_c}
\def\ue{u_e}
\def\us{u_s}
\def\uc{u_c}
\def\ze{z_e}
\def\zej{z_{e (i)}}
\def\zs{z_s}
\def\zsj{z_{s (i)}}
\def\zc{z_c}
\def\zcj{z_{c (i)}}
\newcommand{\zLB}{z_m}
\newcommand{\zLBj}{z_{m (i)}}
\newcommand{\zUB}{z_M}
\newcommand{\zUBj}{z_{M (i)}}
\newcommand{\hzLB}{\hat{z}_m}
\newcommand{\hzLBj}{\hat{z}_{m (i)}}
\newcommand{\hzUB}{\hat{z}_M}
\newcommand{\hzUBj}{\hat{z}_{M (i)}}
\def\xh{x_h} 
\def\xhj{x_{hj}} 
\def\xhcero{x_{h0}} 
\def\xH{\vv{x}_H} 
\def\uh{u_h} 
\def\uhj{u_{hj}} 
\def\uhcero{u_{h0}} 
\def\uH{\vv{u}_H} 
\def\xHo{\vv{x}_H^\circ} 
\def\uHo{\vv{u}_H^\circ} 
\def\xeo{x_e^\circ} 
\def\ueo{u_e^\circ} 
\def\costFunH{{H}} 
\def\stageCostH{{\ell_h}} 
\def\offsetCostH{V_h} 
\def\lT{\mathbb{T}} 
\def\lH{\mathbb{H}} 
\def\zb{\mathpzc{z}} 
\begin{document}
\pagestyle{fancy}

\title{\LARGE \bf
    Harmonic based model predictive control for set-point tracking
}

\author{Pablo~Krupa,~Daniel~Limon,~Teodoro~Alamo%
    \thanks{The authors are with the Department of Systems Engineering and Automation, University of Seville, 41092 Seville, Spain (e-mails: \texttt{pkrupa@us.es}, \texttt{dml@us.es}, \texttt{talamo@us.es}). Corresponding author: Pablo Krupa.}%
\thanks{This work was supported in part by MINERCO-Spain and FEDER funds under grants DPI2016-76493-C3-1-R and PID2019-106212RB-C41; and in part by MCIU-Spain and FSE under grant FPI-2017.}%
\thanks{Accepted version of the article published in IEEE Transactions on Automatic Control. DOI:~10.1109/TAC.2020.3047579~\hfill}
\thanks{\copyright{}2020 IEEE. Personal use of this material is permitted.  Permission from IEEE must be obtained for all other uses, in any current or future media, including reprinting/republishing this material for advertising or promotional purposes, creating new collective works, for resale or redistribution to servers or lists, or reuse of any copyrighted component of this work in other works.~\hfill}%
}

\maketitle
\thispagestyle{fancy}


\begin{abstract}
This paper presents a novel model predictive control (MPC) formulation for set-point tracking. Stabilizing predictive controllers based on terminal ingredients may exhibit stability and feasibility issues in the event of a reference change for small to moderate prediction horizons. In the MPC for tracking formulation, these issues are solved by the addition of an artificial equilibrium point as a new decision variable, providing a significantly enlarged domain of attraction and guaranteeing recursive feasibility for any reference change. However, it may suffer from performance issues if the prediction horizon is not large enough. This paper presents an extension of this formulation where a harmonic artificial reference is used in place of the equilibrium point. The proposed formulation achieves even greater domains of attraction and can significantly outperform other MPC formulations when the prediction horizon is small. We prove the asymptotic stability and recursive feasibility of the proposed controller, as well as provide guidelines for the design of its main ingredients. Finally, we highlight its advantages with a case study of a ball and~plate~system.
\end{abstract}

\section{Introduction} \label{sec:Introduction}

Model predictive control (MPC) is an advanced control strategy that is very prevalent in the current literature due to its inherent capability of providing constraint satisfaction while ensuring asymptotic stability of the target equilibrium point. In MPC, the control law is derived from an optimization problem in which a prediction model is used to predict the future evolution of the system over a prediction horizon \cite{Camacho_S_2013}.

In order to provide asymptotic stability of the closed-loop system, two ingredients are typically added to the MPC formulation: the \textit{terminal cost}, which penalizes a certain measure of discrepancy between the reference and the \textit{terminal state} (i.e. the predicted state at the end of the prediction horizon); and the \textit{terminal set}, which is computed as a positive invariant set of the closed-loop system for the given reference \cite{Rawlings_MPC_2017}. Stability is ensured by imposing the terminal state to lie within the terminal set by the addition of a \textit{terminal constraint} to the MPC formulation.

The use of a terminal set and terminal constraint leads to two downsides when the reference to be tracked can change online. The first issue is that the terminal set must be recomputed for the reference every time it changes. If there are a known-before-hand, finite number of references, then a terminal set can be computed offline for each one of them. Otherwise, it must be computed online each time the reference changes, which is typically very computationally demanding. The second issue is that the feasibility of the MPC problem can be lost in the event of a reference change, i.e. there may not be a feasible solution of the MPC optimization problem for the current state and the new reference. This issue is related to the domain of attraction of the MPC controller, i.e. the set of states for which the closed-loop system is asymptotically stabilizable, since the feasibility is lost when the initial state is out of the domain of attraction of the MPC controller for the new reference. The terminal constraint is the main contributor of this issue when the prediction horizon is not large enough. To see this, note that the predicted state must be able to reach the terminal set within the prediction horizon window and that systems are typically subject to input constraints.

These issues are of particular relevance when dealing with the online implementation of MPC in embedded systems. The severely limited computational and memory resources of these systems make them unsuitable for large prediction horizons and for the computation of positive invariant sets online. Possible solutions to mitigate this are to use \textit{explicit} MPC \cite{Bemporad_explicit_2019, Zeilinger_TAC_2011} or to avoid the computation of a positive invariant set by using a singleton as the terminal set as in \cite{Krupa_TCST_20}. However, the former approach does not scale well with the dimension of the system, and the latter may require a prohibitively large value of the prediction horizon in order to provide good closed-loop performance and not suffer a loss of feasibility in the event of reference changes. There are plenty of other publications on the implementation of MPC in embedded systems, e.g. \cite{Huyck_MED_2012, Hartley_IETCST_2014, Shukla_SD_2017, Lucia_IETII_2018, Jerez_IETAC_2014} and \cite{Wang_TCST_2010}. However, the issues that arise when dealing with small prediction horizons, the recursive feasibility of the MPC controller, or the issue of the online computation of terminal sets are rarely discussed in detail in this particular~field.

Another possible approach would be to use a formulation such as the \textit{MPC for tracking} (MPCT) \cite{Limon_A_2008, Ferramosca_A_2009}, which incorporates a steady state artificial reference into the optimization problem as a decision variable. This formulation offers a significant increase of the domain of attraction when compared to standard MPC formulations and only requires the computation of a single terminal set, valid for all references. Additionally, the asymptotic stability and recursive feasibility of the controller is guaranteed, even in the event of a sudden reference change \cite{Limon_TAC_2018}. However, as we show in Section \ref{sec:HMPC:performance}, the closed-loop performance of the controller can suffer in certain systems if the prediction horizon is too small.

In this paper we present an MPC formulation which we call \textit{harmonic based model predictive control for tracking} and label by \textit{HMPC}. This formulation, which was initially introduced in \cite{Krupa_CDC_19}, is of particular interest when dealing with short prediction horizons, as might be the case when working with embedded systems. As shown in the preliminary results \cite{Krupa_CDC_19}, it attains even greater domains of attraction than MPCT or other standard MPC controllers. Additionally, as we show and discuss in this paper by means of a case study using a ball and plate system, the HMPC controller can show a significant performance improvement when the prediction horizon is small. The improvement can be particularly significant for systems with integrator states and/or systems subject to slew rate constraints on its inputs, as is often the case with robotic and mechatronic systems.

The idea behind this formulation is to substitute the artificial reference of the MPCT formulation by an \textit{artificial harmonic reference}, i.e. a periodic reference signal that is composed of a sine term, a cosine term and a constant. The inclusion of this artificial harmonic reference is heavily influenced by the extensions of the MPCT formulation to tracking periodic references \cite{Limon_MPCTP_2016, Kohler_NMPC_18}. However, in this case, even though the artificial reference used is periodic, the reference to be tracked is a (piecewise) constant set-point, instead of a periodic one.

Periodic MPC for tracking formulations, such as the ones cited in the previous paragraph, or, for instance, \cite{Broomhead_A_2015}, are used to track a generic periodic reference signal with period $T$ by using an artificial reference which is also defined as a generic periodic signal of period $T$, during which the system dynamics and constraints must be imposed. This may lead to a large number of constraints in the optimization problem if the period $T$ is long.
These formulations can also be used for tracking constant set-points, where in this case, the period $T$ can be selected as any positive integer. Choosing $T$ greater than one sample time would only be of interest if it provides an improvement in the performance and/or domain of attraction of the controller. However, doing so would result in an increase in the number of constraints of the optimization problem.

The motivation behind the use of a \textit{harmonic} artificial reference, is that \textit{(i)} indeed, the performance and/or domain of attraction when tracking a constant set-point can be improved by using a periodic artificial reference with a period larger than one sample time, as we show in this paper, and \textit{(ii)} that the system dynamics and constraints can be imposed on it by means of the addition of a small amount of constraints to the optimization problem, the number of which does not depend on the prediction horizon of the controller nor on the period of the harmonic signal.
Therefore, its period can be selected to attain the desired performance and/or domain of attraction without affecting the complexity of the optimization problem.

The addition of these constraints leads to the control law of the HMPC controller being derived from the solution of a second order cone programming problem, instead of the more common quadratic programming problems typically found in other MPC formulations. However, this class of convex optimization problem is common in the literature and can be solved by several efficient algorithms \cite{Domahidi_ECOS_2013, Garstka_COSMO_ECC_2019}. In particular, we use the solver COSMO \cite{Garstka_COSMO_ECC_2019}.

A key property of the HMPC controller is that it retains the recursive feasibility and asymptotic stability features of the MPCT formulation, even in the event of reference changes, as we formally prove in this paper. Moreover, as is also the case with certain versions of the MPCT formulation (in particular the one we highlight in this manuscript), it does not require the computation of a terminal set nor terminal cost.

This paper extends the results of \cite{Krupa_CDC_19} by showing the performance advantages of the HMPC controller, formally proving its asymptotic stability and by including the proof of its recursive feasibility. Additionally, we provide some guidelines for the design of one of its main ingredients: the frequency of the artificial harmonic reference.

The paper is organized as follows. Section \ref{sec:Problem:Formulation} describes the class of system under consideration and control objective. The MPCT controller is described in Section \ref{sec:MPCT}. The proposed HMPC controller is presented in Section \ref{sec:HMPC},  with the theorems stating its recursive feasibility and asymptotic stability. A comparison of the closed-loop performance of these two controllers is presented in Section \ref{sec:HMPC:performance}. Guidelines for the selection of the frequency of the artificial harmonic reference are shown in Section \ref{sec:selection:w}. Finally, conclusions are drawn in Section \ref{sec:conclusions}.

\subsubsection*{Notation}
The relative interior of a set $\cc{X}$ is denoted by $\ri{\cc{X}}$.
The set of integer numbers is denoted by $\N$. Given two integers $i$ and $j$ with ${j \geq i}$, $\N_i^j$ denotes the set of integer numbers from $i$ to $j$, i.e. ${\N_i^j \doteq \{i, i+1, \dots, j-1, j\}}$. 
Given two vectors $x$ and $y$, $x \leq (\geq) \; y$ denotes componentwise inequalities.
The set of positive definite matrices of dimension $n$ is given by $\Sp{n}$, whereas $\Dp{n}$ is the set of \textit{diagonal} positive definite matrices of dimension $n$.
Given vectors $x_j$ defined for a (finite) index set $j \in \cc{J} \subset \N$, we denote by a bold $\vv{x}$ their Cartesian product.
We denote a (non-finite) sequence of vectors $x_j$ indexed by $j \in \N$ by $\{x\}$.
Given a vector $x\inR{n}$, we denote its $i$-th component using a parenthesized subindex $x_{(i)}$.
The set of non-negative real numbers is denoted by $\R_+$.
A function $\alpha: \R_+ \rightarrow \R$ is a $\cc{K}_\infty$-class function if it is continuous, strictly increasing, unbounded above and $\alpha(0) = 0$.
Given a symmetric matrix $A$, we denote by $\lambda_\text{max}(A)$ and $\lambda_\text{min}(A)$ its largest and smallest eigenvalues, respectively.
Given two vectors $x\inR{n}$ and $y\inR{n}$, their standard inner product is denoted by $\sp{x}{y} \doteq \Sum{i=1}{n} x_{(i)} y_{(i)}$.
For a vector $x\inR{n}$ and a matrix $A\in\Sp{n}$, $\|x\| \doteq \sqrt{\sp{x}{x}}$ and $\|x\|_A$ denotes the weighted Euclidean norm $\|x\|_A \doteq \sqrt{\sp{x}{A x}}$.

\section{Problem formulation} \label{sec:Problem:Formulation}

We consider a controllable linear time-invariant system described by the following discrete state space model,
\begin{equation} \label{eq:Model}
    x_{k+1} = A x_k + B u_k,
\end{equation}
where $x_k \inR{n}$ and $u_k \inR{m}$ are the state and control input at sample time $k$, respectively. Additionally, we consider that the system is subject to the box constraint,
\begin{equation} \label{eq:Constraints}
    (x_k, u_k) \in \cc{Z} \doteq \{(x, u) {\in} \R^n {\times} \R^m : \zLB \leq C x {+} D u \leq \zUB \},
\end{equation}
where we assume that the upper and lower bounds $\zLB \inR{n_z}$, $\zUB \inR{n_z}$, satisfy $z_m < z_M$.

We are interested in controllers capable of steering the system to the given reference $(x_r, u_r)$ while satisfying the system constraints \eqref{eq:Constraints}. This will only be possible if the reference is an \textit{admissible} steady state of the system, as defined in the following definition. Otherwise, we wish the system to be steered to the ``closest'' admissible steady state to the reference, for some given criterion of closeness.

\begin{definition} \label{def:Admissible}
    An ordered pair $(x_a, u_a) \in \R^n \times \R^m$ is said to be an admissible steady state of system (\ref{eq:Model}) subject to (\ref{eq:Constraints}) if
    \bRlist
        \item $x_a = A x_a + B u_a$, i.e. it is a steady state of system (\ref{eq:Model}).
        \item $(x_a, u_a) \in \ri{\cc{Z}}$.
    \eRlist
\end{definition}

\begin{remark} \label{rem:Admissible}
    We note that the imposition of condition \textit{(ii)} in the previous definition, instead of $(x_a, u_a) \in \cc{Z}$, is necessary to avoid the possible controllability loss when the constraints are active at the equilibrium point \cite{Limon_A_2008}. In a practical setting, this restriction is typically substituted by defining vectors ${\hzLB \doteq \zLB + \epsilon}$ and ${\hzUB \doteq \zUB - \epsilon}$, where $\epsilon \inR{n_z}$ is some arbitrarily small positive vector such that $\hzLB < \hzUB$, and instead imposing $(x_a, u_a) \in \hat{\cc{Z}} \subset \ri{\cc{Z}}$, where
\begin{equation*}
\hat{\cc{Z}} \doteq \set{(x, u) \in \R^n \times \R^m}{\hzLB \leq C x + D u \leq \hzUB},
\end{equation*}
This way, we avoid working with the open set $\ri{\cc{Z}}$.
\end{remark}

\section{Model Predictive Control for tracking} \label{sec:MPCT}

This section recalls the MPC \textit{for tracking} (MPCT) formulation \cite{Ferramosca_A_2009}, which is the basis of the controller we propose in Section \ref{sec:HMPC}.

In MPCT, an artificial reference is included as an additional decision variable of the optimization problem. This inclusion provides several benefits, such as \textit{(i)} a significant increase of the domain of attraction w.r.t. standard MPC formulations, \textit{(ii)} recursive feasibility, even in the event of reference changes, and \textit{(iii)} the use of a terminal set which is valid for any reference. In what follows, we present a specific MPCT formulation which uses a singleton as the terminal set and which does not require a terminal cost.

For a given prediction horizon $N$, the MPCT control law for a given state $x$ and reference $(x_r, u_r)$ is derived from the solution of the following convex optimization problem labeled by $\lT(x; x_r, u_r)$,
\begin{subequations} \label{eq:MPCT} 
\begin{align}  
    \lT(x; x_r, u_r) \doteq &\min\limits_{\vv{x}, \vv{u}, x_a, u_a} \; J(\vv{x}, \vv{u}, x_a, u_a; x_r, u_r) \\ 
    s.t.& \; x_{j+1} = A x_j + B u_j, \; j\in\N_0^{N-1} \\ 
        & \; \zLB \leq C x_j + D u_j \leq \zUB, \; j\in\N_0^{N-1} \\
        & \; x_0 = x \\
        & \; x_N = x_a \label{eq:MPCT:Terminal} \\
        & \; x_a = A x_a + B u_a \label{eq:MPCT:Steady:State}\\
        & \; \hzLB \leq C x_a + D u_a \leq \hzUB, \label{eq:MPCT:z_a}
\end{align}
\end{subequations}
where $\vv{x} = ( x_0, \dots, x_{N-1}, x_{N} )$ are the predicted states, $\vv{u} = ( u_0, \dots, u_{N-1} )$ the control inputs, and $(x_a, u_a)$ is the artificial reference.
The cost function $J(\vv{x}, \vv{u}, x_a, u_a; x_r, u_r)$ is composed of two terms: the summation of stage costs
\begin{equation*} \label{eq:State:Cost:MPCT}
    \ell_T(\vv{x}, \vv{u}, x_a, u_a) = \sum\limits_{j=0}^{N-1} \| x_j - x_a \|_{Q}^{2} + \sum\limits_{j=0}^{N-1} \| u_j - u_a \|_{R}^{2}, 
\end{equation*}
which penalizes the distance between the predicted states $x_j$ and inputs $u_j$ with the artificial reference by means of the cost function matrices $Q \in \Sp{n}$ and $R \in \Sp{m}$; and the offset cost 
\begin{equation} \label{eq:Terminal:Cost:MPCT}
    V_T(x_a, u_a; x_r, u_r) = \| x_a - x_r \|^{2}_{T_a} + \| u_a - u_r \|^{2}_{S_a}, 
\end{equation}
which penalizes the distance between the artificial reference $(x_a, u_a)$ and $(x_r, u_r)$ by means of the cost function matrices ${T_a \in \Sp{n}}$ and ${S_a \in \Sp{m}}$.
Note that equations \eqref{eq:MPCT:Steady:State} and \eqref{eq:MPCT:z_a}, where $\hzLB$ and $\hzUB$ are obtained as in Remark \ref{rem:Admissible}, guarantee that $(x_a, u_a)$ is an admissible reference (see Def. \ref{def:Admissible}).

The MPCT formulation guarantees that the closed-loop system asymptotically converges to an admissible steady state of the system so long as the problem is initially feasible, regardless of whether or not the reference $(x_r, u_r)$ is admissible \cite{Limon_A_2008, Ferramosca_A_2009}. In fact, if the reference is not admissible, the system will converge to the admissible steady state that minimizes the offset cost function \eqref{eq:Terminal:Cost:MPCT}.

\section{Harmonic based MPC for tracking} \label{sec:HMPC}

This section presents the main contribution of the paper: a harmonic based MPC formulation for tracking. The idea behind this formulation is to substitute the artificial reference of MPCT with the artificial harmonic reference sequences $\{x_h\}$, $\{u_h\}$, whose values at each discrete time instant $j\in\N$ are given by,
\begin{subequations} \label{eq:harmonic:signals}
\begin{align}
    \xhj &= \xe + \xs \sin(w ({j}{-}{N})) + \xc \cos(w ({j}{-}{N})), \label{eq:x_h}\\
    \uhj &= \ue + \us \sin(w ({j}{-}{N})) + \uc \cos(w ({j}{-}{N})), \label{eq:u_h}
\end{align}
\end{subequations}
where $w > 0$ is the base frequency. The harmonic sequences $\{x_h\}$ and $\{u_h\}$ are parameterized by decision variables $\xe$, $\xs$, ${\xc \inR{n}}$ and $u_e$, $\us$, ${\uc \inR{m}}$. To simplify the text, we use the following notation,
\begin{align*}
    &\xH \doteq (\xe, \xs, \xc) \in \R^n \times \R^n \times \R^n, \\
    &\uH \doteq (\ue, \us, \uc) \in \R^m \times \R^m \times \R^m,
\end{align*}
\vspace{-2em}
\begin{align} \label{eq:def:z}
\bmat{ccc} \ze & \zs & \zc \emat \doteq \bmat{cc} C & D \emat \bmat{ccc} \xe & \xs & \xc \\ \ue & \us & \uc \emat.
\end{align}

For a given prediction horizon $N$ and base frequency $w$, the HMPC control law for a given state $x$ and reference $(x_r, u_r)$ is derived from the following second order cone programming problem labeled by $\lH(x; x_r, u_r)$,
\begin{subequations} \label{eq:HMPC} 
\begin{align}
    \moveEq{-12} \lH(&x; x_r, u_r) \doteq \min\limits_{\vv{x},\vv{u}, \xH, \uH} \; \costFunH(\vv{x}, \vv{u}, \xH, \uH; x_r, u_r) \\ 
    \moveEq{-12}  &s.t.\; x_{j+1} = A x_j + B u_j, \; j\in\N_0^{N-1} \label{eq:HMPC:dynamics}\\
                  & \zLB \leq C x_j + D u_j \leq \zUB, \; j\in\N_0^{N-1} \label{ineq:HMPC:z:first}\\
  & x_0 = x \label{eq:HMPC:cond:inic}\\
  & x_N = \xe + \xc \label{eq:HMPC:xN}\\
  & \xe = A \xe + B \ue \label{eq:HMPC:xe}\\
  & \xs \cos(w) - \xc \sin(w) = A \xs + B \us \label{eq:HMPC:xs}\\
  & \xs \sin(w) + \xc \cos(w) = A \xc + B \uc \label{eq:HMPC:xc}\\
  & \sqrt{ \zsj^2 + \zcj^2 } \leq \zej - \hzLBj, \; i\in\N_1^{n_z} \label{ineq:HMPC:z:minus}\\
  & \sqrt{ \zsj^2 + \zcj^2 } \leq \hzUBj - \zej, \; i\in\N_1^{n_z}, \label{ineq:HMPC:z:plus}
\end{align}
\end{subequations}
where $\vv{x} = \{ x_0, \dots, x_{N-1} \}$, $\vv{u} = \{ u_0, \dots, u_{N-1} \}$, and the cost function $\costFunH(\vv{x}, \vv{u}, \xH, \uH; x_r, u_r)$ is composed of two terms: the summation of stage costs
\begin{equation*} \label{eq::HMPC:Stage:Cost}
    \stageCostH(\vv{x}, \vv{u}, \xH, \uH) = \Sum{j=0}{N-1} \| x_j - \xhj \|_Q^2 + \| u_j - \uhj \|_R^2,
\end{equation*}
where $Q\in\Sp{n}$ and $R\in\Sp{m}$; and the offset cost
\begin{align} \label{eq:HMPC:Offset:Cost}
    \moveEq{-10}  \offsetCostH(\xH, \uH &; x_r, u_r) = \| \xe - x_r \|_{T_e}^2 + \| \ue - u_r \|_{S_e}^2 \nonumber \\
                                      &+ \| \xs \|_{T_h}^2 + \| \xc \|_{T_h}^2 + \| \us \|_{S_h}^2 + \| \uc \|_{S_h}^2,
\end{align}
where $T_e\in\Sp{n}$, $T_h \in\Dp{n}$, $S_e\in\Sp{m}$, and $S_h \in\Dp{m}$.

Note that the artificial harmonic reference \eqref{eq:harmonic:signals} is taking the role of the artificial reference $(x_a, u_a)$ of the MPCT controller. Therefore, it must satisfy the system dynamics and the tightened constraints (see Remark \ref{rem:Admissible}) at all future time instants. The reason for choosing a harmonic signal \eqref{eq:harmonic:signals}, as opposed to some other periodic signal, is that these two conditions can be imposed by the addition of the small amount of constraints \eqref{eq:HMPC:xe} to \eqref{ineq:HMPC:z:plus}, the number of which does not depend on the value of the prediction horizon nor on the period of the artificial harmonic reference. In particular, as shown in Property \ref{prop:periodic:dynamics} in Appendix \ref{app:properties}, the satisfaction of the system dynamics is imposed with the inclusion of constraints \eqref{eq:HMPC:xe} to \eqref{eq:HMPC:xc}. Additionally, as shown by Property \ref{prop:bounds} in Appendix \ref{app:properties}, the satisfaction of the tightened constraints is imposed with the inclusion of constraints \eqref{ineq:HMPC:z:minus} and \eqref{ineq:HMPC:z:plus}.

Constraint \eqref{eq:HMPC:xN} imposes that the predicted state $x_N$ reaches the harmonic artificial reference (note that $x_{hN} = x_e +x_c$). Then, since we are imposing the satisfaction of the system dynamics and tightened constraints with \eqref{eq:HMPC:xe}-\eqref{ineq:HMPC:z:plus}, we can remain in an admissible state trajectory \eqref{eq:x_h} by applying the (admissible) control actions given by \eqref{eq:u_h}.

We denote the optimal value of optimization problem \eqref{eq:HMPC} for a state $x$ and a given reference $(x_r, u_r)$ by $\lH^*(x; x_r, u_r) {=} \costFunH(\vv{x}^*, \vv{u}^*, \xH^*, \uH^*; x_r, u_r)$, where $\vv{x}^*$, $\vv{u}^*$, $\xH^*$, $\uH^*$ are the arguments that minimize \eqref{eq:HMPC}. Furthermore, for every $j \in \N$, we denote by
\begin{subequations} \label{eq:optimal:harmonic:signals}
\begin{align}
    \xhj^* &= \xe^* + \xs^* \sin(w ({j}{-}{N})) + \xc^* \cos(w ({j}{-}{N})), \label{eq:optimal:harmonic:signals:x} \\
    \uhj^* &= \ue^* + \us^* \sin(w ({j}{-}{N})) + \uc^* \cos(w ({j}{-}{N})), \label{eq:optimal:harmonic:signals:u}
\end{align}
\end{subequations}
the harmonic signals parameterized by $(\xH^*, \uH^*)$.
At each sample time $k$, the HMPC control law is given by $u_k = u_0^*$, obtained from the solution of $\lH(x_k; x_r, u_r)$.

Note that the constraints \eqref{eq:HMPC:dynamics}-\eqref{ineq:HMPC:z:plus} do not depend on the reference. Therefore, the feasibility region of the HMPC controller, i.e. the set of states $x$ for which $\lH(x; x_r, u_r)$ is feasible, is independent of the reference. As such, feasibility is never lost in the event of reference changes.

Theorem \ref{theo:HMPC:Stability} states the asymptotic stability of the HMPC controller to the \textit{optimal artificial harmonic reference}, which is defined and characterized below in Definition \ref{def:optimal:artificial:reference:HMPC}. In order to prove it, we first prove the recursive feasibility of the HMPC controller, which is stated in Theorem \ref{theo:Recursive:Feasibility}. This theorem was originally stated in \cite[Theorem 1]{Krupa_CDC_19} without its proof, which we include in Appendix \ref{app:proof:recursive:feasibility} of this manuscript.

\begin{definition}[Optimal artificial harmonic reference] \label{def:optimal:artificial:reference:HMPC}
    Given a reference $(x_r, u_r)$, we define the \textit{optimal artificial harmonic reference} of the HMPC controller as the harmonic sequences $\{\xh^\circ\}$, $\{\uh^\circ\}$ (see \eqref{eq:harmonic:signals}) parameterized by the unique solution $(\xH^\circ, \uH^\circ)$ of the strongly convex optimization problem
\begin{align} \label{eq:OP:optimal:harmonic:refrefence:HMPC}
    (\xH^\circ, \uH^\circ) = &\arg\min\limits_{\xH, \uH} \offsetCostH(\xH, \uH; x_r, u_r) \\
                               &s.t.\; \eqref{eq:HMPC:xe} \text{-} \eqref{ineq:HMPC:z:plus}. \nonumber
\end{align}
Additionally, we denote by $\offsetCostH^\circ(x_r, u_r) \doteq \offsetCostH(\xH^\circ, \uH^\circ; x_r, u_r)$ the optimal value of problem \eqref{eq:OP:optimal:harmonic:refrefence:HMPC}.
\end{definition}

The following lemma states that the \textit{optimal artificial harmonic reference} is in fact an admissible steady state of system \eqref{eq:Model} subject to \eqref{eq:Constraints}, i.e. $\xhj^\circ = \xeo$, $\uhj^\circ = \ueo$ and $(\xhj^\circ, \uhj^\circ) \in \hat{\cc{Z}}$, $\forall j\in\N$.

\begin{lemma} \label{lemma:optimal:artificial:reference:HMPC} 
    Consider optimization problem \eqref{eq:OP:optimal:harmonic:refrefence:HMPC}. Then, for any $(x_r, u_r)$, its optimal solution is the admissible steady state $(\xeo, \ueo) \in \hat{\cc{Z}}$ that minimizes $\| \xeo - x_r \|^2_{T_e} + \| \ueo - u_r \|^2_{S_e}$. That is, $\xHo = (\xeo, 0, 0)$ and $\uHo = (\ueo, 0, 0)$.
\end{lemma}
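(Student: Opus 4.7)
The plan is to exploit the separable structure of the offset cost $\offsetCostH$ in \eqref{eq:HMPC:Offset:Cost}, which splits cleanly as the sum of an equilibrium term $\|\xe - x_r\|_{T_e}^2 + \|\ue - u_r\|_{S_e}^2$ and a nonnegative harmonic penalty $\|\xs\|_{T_h}^2 + \|\xc\|_{T_h}^2 + \|\us\|_{S_h}^2 + \|\uc\|_{S_h}^2$ that vanishes precisely when all four harmonic components are zero.

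First, I would verify that the proposed candidate $\xHo = (\xeo, 0, 0)$, $\uHo = (\ueo, 0, 0)$ is feasible for \eqref{eq:OP:optimal:harmonic:refrefence:HMPC}. With the harmonic components set to zero, constraints \eqref{eq:HMPC:xs}-\eqref{eq:HMPC:xc} collapse to $0 = 0$, the equilibrium constraint \eqref{eq:HMPC:xe} holds because $(\xeo, \ueo)$ satisfies $\xeo = A \xeo + B \ueo$, and the cone constraints \eqref{ineq:HMPC:z:minus}-\eqref{ineq:HMPC:z:plus} reduce to $\hzLB \leq C \xeo + D \ueo \leq \hzUB$, which is precisely the statement that $(\xeo, \ueo) \in \hat{\cc{Z}}$.

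Next, I would establish optimality by an improvement argument. Let $(\xH, \uH)$ be any feasible point of \eqref{eq:OP:optimal:harmonic:refrefence:HMPC} and define the modified point $\tilde{\xH} \doteq (\xe, 0, 0)$, $\tilde{\uH} \doteq (\ue, 0, 0)$. The cone constraints force $\hzLB \leq C \xe + D \ue \leq \hzUB$, since their left-hand sides are nonnegative; hence $(\xe, \ue) \in \hat{\cc{Z}}$, and the argument of the previous paragraph shows that $(\tilde{\xH}, \tilde{\uH})$ is feasible. Because $T_h \in \Dp{n}$ and $S_h \in \Dp{m}$ are positive definite, the harmonic penalty is nonnegative and vanishes at $(\tilde{\xH}, \tilde{\uH})$, so $\offsetCostH(\tilde{\xH}, \tilde{\uH}; x_r, u_r) \leq \offsetCostH(\xH, \uH; x_r, u_r)$. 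The problem therefore reduces to minimizing $\|\xe - x_r\|^2_{T_e} + \|\ue - u_r\|^2_{S_e}$ over admissible equilibria $(\xe, \ue) \in \hat{\cc{Z}}$ with $\xe = A \xe + B \ue$, whose minimizer is exactly the claimed $(\xeo, \ueo)$. Uniqueness of the full minimizer follows from strong convexity of $\offsetCostH$ on the convex feasible set, since $T_e, S_e, T_h, S_h$ are all positive definite.

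The only substantive observation is that the cone constraints themselves enforce $(\xe, \ue) \in \hat{\cc{Z}}$ at every feasible point of \eqref{eq:OP:optimal:harmonic:refrefence:HMPC}, which is what makes zeroing the harmonic components a feasible move. Once this is recognized, separability of $\offsetCostH$ makes the conclusion immediate and no further machinery is required.
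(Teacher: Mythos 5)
Your proposal is correct and follows essentially the same route as the paper's proof: both rest on the observation that the cone constraints \eqref{ineq:HMPC:z:minus}--\eqref{ineq:HMPC:z:plus} force $(\xe,\ue)\in\hat{\cc{Z}}$, so that zeroing the harmonic components preserves feasibility while the separable, positive-definite harmonic penalty can only decrease. The only difference is presentational (you argue by direct improvement where the paper argues by contradiction), and your explicit reduction to the equilibrium subproblem plus the strong-convexity uniqueness remark are welcome additions rather than deviations.
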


\begin{proof} 
    We prove the lemma by contradiction. Assume that $\hat{\vv{x}}_H^\circ = (\xeo, \xs^\circ, \xc^\circ)$, $\hat{\vv{u}}_H^\circ = (\ueo, \us^\circ, \uc^\circ)$, is the optimal solution of \eqref{eq:OP:optimal:harmonic:refrefence:HMPC} and that at least some (if not all) of $\xs^\circ$, $\xc^\circ$, $\us^\circ$, $\uc^\circ \neq 0$. First, we show that $\vv{x}_H^\circ = (\xeo, 0, 0)$, $\vv{u}_H^\circ = (\ueo, 0, 0)$ satisfy \eqref{eq:HMPC:xe}-\eqref{ineq:HMPC:z:plus}. Constraints \eqref{eq:HMPC:xs} and \eqref{eq:HMPC:xc} are trivially satisfied and \eqref{eq:HMPC:xe} is satisfied since $(\hat{\vv{x}}_H^\circ, \hat{\vv{u}}_H^\circ)$ is assumed to be the solution of \eqref{eq:OP:optimal:harmonic:refrefence:HMPC}. Moreover, since
    $$0 \leq \sqrt{ (\zsj^\circ)^2 + (\zcj^\circ)^2},\; \forall i\in\N_1^{n_z},$$
we have that \eqref{ineq:HMPC:z:minus} and \eqref{ineq:HMPC:z:plus} are also satisfied for $(\vv{x}_H^\circ, \vv{u}_H^\circ)$.
Finally, it is clear from the initial assumption and \eqref{eq:HMPC:Offset:Cost} that
$$\offsetCostH(\vv{x}_H^\circ, \vv{u}_H^\circ; x_r, u_r) < \offsetCostH(\hat{\vv{x}}_H^\circ, \hat{\vv{u}}_H^\circ; x_r, u_r),$$
contradicting the optimality of $(\hat{\vv{x}}_H^\circ, \hat{\vv{u}}_H^\circ)$. The fact that $(\xeo, \ueo) \in \hat{\cc{Z}}$ follows from the satisfaction of \eqref{ineq:HMPC:z:minus}-\eqref{ineq:HMPC:z:plus}.
\end{proof}

The following theorem states the recursive feasibility of the HMPC controller. That is, suppose that a state $x$ belongs to the feasibility region of the HMPC controller. Then, for any feasible solution $\vv{x}$, $\vv{u}$, $\vv{x}_H$ and $\vv{u}_H$ of $\lH(x; x_r, u_r)$ we have that the successor state $A x + B u_0$ also belongs to the feasibility region of the HMPC controller. The first claim of the theorem states that feasible solutions of the HMPC controller provide constraint satisfaction for all future predicted states. The second claim states the recursive feasibility property of the HMPC controller. Its proof, which we include in Appendix \ref{app:proof:recursive:feasibility}, follows the standard approach, in which it is shown that a feasible solution of \eqref{eq:HMPC} can be be obtained for the successor state from the feasible solution of the previous time instant.

\begin{theorem}[Recursive feasibility of the HMPC controller] \label{theo:Recursive:Feasibility} 
    Suppose that $x$ belongs to the feasibility region of the HMPC controller. Suppose also that ${\bar{\vv{x}} = \{\bx_0, \ldots, \bx_{N-1}\}}$, ${\bar{\vv{u}} = \{\bu_0, \ldots, \bu_{N-1}\}}$, $\xe$, $\xs$, $\xc$, $\ue$, $\us$, $\uc$, constitute a feasible solution to the constraints (\ref{eq:HMPC:dynamics}) to (\ref{ineq:HMPC:z:plus}). Then,
\bRlist
    \item The control input sequence $\{u\}$ defined as
        \begin{equation}
            u_j = \bsis{l} \bu_j, \; j\in\N_0^{N-1} \\ 
            \uhj, \; j \geq N, \\ \esis \label{eq:def:u:j}
        \end{equation}
        where $\uhj$ is given by \eqref{eq:u_h}, and the trajectory $\{x\}$ defined as $x_0=x$,
        $$ x_{j+1} = A x_j + B u_j, \;j\geq 0, $$
        satisfies
        \begin{equation} \label{ineq:z:totales}
            \zLB \leq C x_j + D u_j \leq \zUB , \forall j\geq 0.
        \end{equation}
    \item The successor state $Ax+B\bar{u}_0$ also belongs to the feasibility region of the HMPC controller.
\eRlist
\end{theorem}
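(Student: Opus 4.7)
The plan is to begin by establishing two elementary properties of the artificial harmonic reference $\{x_h\},\{u_h\}$ built from any feasible $(\xH,\uH)$ of \eqref{eq:HMPC}: (P1) it satisfies the system dynamics, $x_{h,j+1}=A\xhj+B\uhj$ for all $j\in\N$; and (P2) $(\xhj,\uhj)\in\hat{\cc{Z}}$ for all $j\in\N$. These are the facts referenced in the excerpt as Properties \ref{prop:periodic:dynamics} and \ref{prop:bounds}. Property (P1) is a direct trigonometric consequence of \eqref{eq:HMPC:xe}--\eqref{eq:HMPC:xc}: expanding $x_{h,j+1}$ by the sine/cosine addition formulas and substituting the three identities collapses the result to $A\xhj+B\uhj$. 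Property (P2) follows from the componentwise bound $|\zsj\sin\theta+\zcj\cos\theta|\le\sqrt{\zsj^2+\zcj^2}$ (valid for every $\theta$) combined with \eqref{ineq:HMPC:z:minus}--\eqref{ineq:HMPC:z:plus}.

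With (P1)--(P2) in hand, claim (i) is immediate. For $j\in\N_0^{N-1}$ the bounds \eqref{ineq:z:totales} coincide with \eqref{ineq:HMPC:z:first} and hold by hypothesis. Evaluating \eqref{eq:harmonic:signals} at $j=N$ gives $x_{hN}=\xe+\xc$, which combined with \eqref{eq:HMPC:xN} yields $\bar{x}_N=x_{hN}$. Since $u_N=u_{hN}$ by construction, (P1) together with an induction on $j\ge N$ shows $x_j=\xhj$ and $u_j=\uhj$ for every $j\ge N$; the required constraint satisfaction for these indices then follows from (P2) and $\hat{\cc{Z}}\subset\cc{Z}$.

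For claim (ii) the plan is to exhibit an explicit feasible candidate for $\lH(Ax+B\bar{u}_0;x_r,u_r)$ by \emph{shifting} the trajectory one step and \emph{rotating} the harmonic parameters by angle $w$ so that the resulting harmonic is exactly the original one delayed by one sample. Concretely, set $\bar{x}'_j=\bar{x}_{j+1}$ and $\bar{u}'_j=\bar{u}_{j+1}$ for $j\in\N_0^{N-2}$, $\bar{x}'_{N-1}=\bar{x}_N=x_{hN}$, $\bar{u}'_{N-1}=u_{hN}$, $\xe'=\xe$, $\ue'=\ue$, and
\[
\xs'=\xs\cos w-\xc\sin w,\qquad \xc'=\xs\sin w+\xc\cos w,
\]
with the analogous rotation for $\us',\uc'$. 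A short application of the addition formulas then shows $x'_{hj}=x_{h,j+1}$ and $u'_{hj}=u_{h,j+1}$ for all $j\in\N$, which is the entire point of the construction.

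The remaining task is verifying that this candidate satisfies every constraint of \eqref{eq:HMPC}, and this verification is where the main technical obstacle lies. Constraints \eqref{eq:HMPC:dynamics} and \eqref{ineq:HMPC:z:first} on $j\in\N_0^{N-2}$ are inherited from the original feasibility, and at $j=N-1$ they follow from (P1) and (P2) respectively. The initial condition \eqref{eq:HMPC:cond:inic} reduces to $\bar{x}_1=Ax+B\bar{u}_0$. For the terminal equality \eqref{eq:HMPC:xN}, (P1) gives $\bar{x}'_N=Ax_{hN}+Bu_{hN}=x_{h,N+1}$ while the rotation formulas give $\xe'+\xc'=\xe+\xs\sin w+\xc\cos w=x_{h,N+1}$, so the two sides agree. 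Constraint \eqref{eq:HMPC:xe} is inherited because $(\xe',\ue')=(\xe,\ue)$. The real work is showing that the rotated parameters still satisfy the coupling equations \eqref{eq:HMPC:xs}--\eqref{eq:HMPC:xc} and the SOC constraints \eqref{ineq:HMPC:z:minus}--\eqref{ineq:HMPC:z:plus}. For \eqref{eq:HMPC:xs}--\eqref{eq:HMPC:xc}, a trigonometric calculation using the double-angle identities shows that the rotated system reproduces exactly the same pair of equations as the original. For \eqref{ineq:HMPC:z:minus}--\eqref{ineq:HMPC:z:plus}, the crucial observation is that the componentwise map $(\zsj,\zcj)\mapsto(\zsj\cos w-\zcj\sin w,\ \zsj\sin w+\zcj\cos w)$ is a planar rotation, hence an isometry, so $\sqrt{(z'_{s(i)})^2+(z'_{c(i)})^2}=\sqrt{\zsj^2+\zcj^2}$, while $\zej$ is unchanged; the SOC constraints therefore transfer verbatim, the candidate is feasible, and claim (ii) follows.
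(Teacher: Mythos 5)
Your proposal is correct and follows essentially the same route as the paper's proof: claim (i) via the harmonic dynamics and Cauchy--Schwarz bound properties (the paper's Properties \ref{prop:periodic:dynamics} and \ref{prop:bounds}), and claim (ii) via the one-step shift with the rotated harmonic parameters, using the isometry of the planar rotation to transfer the SOC constraints. The only cosmetic difference is that you define $x_s^+$, $x_c^+$ directly by the rotation whereas the paper defines them as $A x_s + B u_s$, $A x_c + B u_c$; these coincide for any feasible solution by \eqref{eq:HMPC:xs}--\eqref{eq:HMPC:xc}, so the two constructions are identical.
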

\begin{proof} \renewcommand{\qedsymbol}{}
    See Appendix \ref{app:proof:recursive:feasibility}.
\end{proof}

Theorem \ref{theo:HMPC:Stability} states the asymptotic stability of the HMPC controller to the \textit{optimal artificial harmonic reference} (Def.~\ref{def:optimal:artificial:reference:HMPC}). Its proof, which is inspired by the proof of the asymptotic stability for the MPCT controller \cite[Theorem 1]{Limon_TAC_2018}, relies on the following well known Lyapunov asymptotic stability theorem \cite[Appendix B.3]{Rawlings_MPC_2017}. However, in our case, we directly derive a Lyapunov function that satisfies the asymptotic stability conditions. An assumption made by Theorem \ref{theo:HMPC:Stability} is that $N$ must be greater or equal to the controllability index of the system, which we define below in Definition \ref{def:controllability:index}. Note that the controllability index of a controllable system is always lower or equal to the dimension of its state space.

\begin{theorem}[Lyapunov asymptotic stability] \label{theo:Lyapunov:Stability} 
    Consider an autonomous system $z_{k+1} = f(z_k)$ with states $z_k \inR{n}$ and where the function $f: \R^n \rightarrow  \R^n$ is continuous and satisfies $f(0) = 0$. Let $\Gamma$ be a positive invariant set and $\Omega \subseteq \Gamma$ be a compact set, both including the origin as an interior point. If there exists a function $W:\R^n \rightarrow \R_+$ and suitable $\cc{K}_\infty$-class functions $\alpha_1(\cdot)$ and $ \alpha_2(\cdot)$ such that,
\bRlist
    \item $W(z_k) \geq \alpha_1(\| z_k \|), \; \forall z_k \in \Gamma,$
    \item $W(z_k) \leq \alpha_2(\| z_k \|), \; \forall z_k \in \Omega,$
    \item $W(z_{k+1}) < W(z_k), \forall z_k \in \Gamma \setminus \{0\},$ \\ and $W(z_{k+1}) = W(z_k)$ if $z_k = 0$,
\eRlist
then $W(\cdot)$ is a Lyapunov function for $z_{k+1} = f(z_k)$ in $\Gamma$ and the origin is asymptotically stable for all initial states in $\Gamma$.
\end{theorem}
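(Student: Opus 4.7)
The plan is to prove stability of the origin in two stages: first, establish that $W$ is monotonically non-increasing along every trajectory starting in $\Gamma$; second, translate this monotone-descent property into the classical $\varepsilon$--$\delta$ characterization of Lyapunov stability, using the $\cc{K}_\infty$ bounds from (i) and (ii).

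For the first stage, I would exploit the positive invariance of $\Gamma$ for $z_{k+1} = f(z_k)$: every trajectory with $z_0 \in \Gamma$ satisfies $z_k \in \Gamma$ for all $k \geq 0$, so condition (iii) applies along the entire trajectory and yields $W(z_{k+1}) \leq W(z_k)$ for every $k$, with equality if and only if $z_k = 0$. Consequently, the real sequence $\{W(z_k)\}_{k \geq 0}$ is non-increasing and, by (i), bounded below by zero.

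For the second stage, fix an arbitrary $\varepsilon > 0$. Since $\Omega$ is compact and contains the origin in its interior, there exists $r > 0$ such that $\{z \inR{n} : \|z\| \leq r\} \subseteq \Omega$; set $\eta \doteq \min\{\varepsilon, r\}$. Because $\alpha_1$ is strictly increasing with $\alpha_1(0) = 0$ we have $\alpha_1(\eta) > 0$, so continuity of $\alpha_2$ at zero allows me to pick $\delta \in (0, r]$ with $\alpha_2(\delta) < \alpha_1(\eta)$. For any initial state with $\|z_0\| \leq \delta$, the inclusion $z_0 \in \Omega$ together with (ii) gives $W(z_0) \leq \alpha_2(\delta) < \alpha_1(\eta)$. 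Chaining this with the monotonicity from the first stage and with (i), $\alpha_1(\|z_k\|) \leq W(z_k) \leq W(z_0) < \alpha_1(\eta)$ for every $k \geq 0$, and strict monotonicity of $\alpha_1$ forces $\|z_k\| < \eta \leq \varepsilon$, which is exactly Lyapunov stability of the origin relative to $\Gamma$.

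The main subtlety is the coupling between $\Omega$ and $\Gamma$: the upper bound (ii) is not assumed on the whole of $\Gamma$, and the trajectory is not a priori confined to $\Omega$. This is resolved by invoking (ii) only at the initial state (hence the choice $\delta \leq r$) and then propagating the resulting sublevel bound on $W$ via (iii). Note that the hypothesis only asserts stability, not attractivity, so no continuity of $f$ or LaSalle-type invariance argument is needed here; upgrading to asymptotic stability would require additional structure that the theorem as stated does not claim.
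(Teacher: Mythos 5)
The paper does not actually prove this statement: Theorem \ref{theo:Lyapunov:Stability} is quoted as a ``well known'' result with a citation to Rawlings and Mayne (Appendix B.3) and is used only as a tool in the proof of Theorem \ref{theo:HMPC:Stability}. So there is no in-paper proof to compare against; judged on its own, your argument is correct and is the standard textbook proof of discrete-time Lyapunov stability. The two-stage structure is sound: positive invariance of $\Gamma$ lets you apply condition \emph{(iii)} along the whole trajectory to get that $\{W(z_k)\}$ is non-increasing, and the $\varepsilon$--$\delta$ construction via $\eta = \min\{\varepsilon, r\}$ and $\alpha_2(\delta) < \alpha_1(\eta)$ is exactly right. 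You also correctly identify and resolve the one delicate point, namely that the upper bound \emph{(ii)} holds only on $\Omega$ while the trajectory is only guaranteed to remain in $\Gamma$: invoking \emph{(ii)} solely at the initial condition (forced into $\Omega$ by taking $\delta \leq r$) and then propagating the sublevel bound through the monotone decrease of $W$ is the standard and correct fix. Your closing remark is also accurate --- the hypotheses as stated (strict decrease without a $\cc{K}_\infty$ lower bound on the decrement) yield stability but not attractivity, which is consistent with the theorem claiming only that the origin is stable; the paper supplies the convergence argument separately in the proof of Theorem \ref{theo:HMPC:Stability}.
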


\begin{definition}[Controllability index] \label{def:controllability:index}
    Consider a controllable system \eqref{eq:Model}. Its controllability index is the smallest integer ${j > 0}$ for which matrix $[B, A B, A^2 B \dots A^{j-1} B]$ has rank $n$.
\end{definition}

\begin{theorem}[Asymptotic stability of the HMPC controller] \label{theo:HMPC:Stability} 
    Consider a controllable system \eqref{eq:Model} subject to \eqref{eq:Constraints} and assume that $N$ is greater or equal to its controllability index. Then, for any reference $(x_r, u_r)$ and initial state $x$ belonging to the feasibility region of the HMPC controller $\lH(x; x_r, u_r)$, the system controlled by the control law derived from the solution of \eqref{eq:HMPC} is stable, fulfills the system constraints at all future time instants, and asymptotically converges to the optimal artificial harmonic reference $(\xeo, \ueo)$ given by Lemma~\ref{lemma:optimal:artificial:reference:HMPC}.
\end{theorem}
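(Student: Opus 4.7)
My plan is to prove asymptotic stability by means of Theorem \ref{theo:Lyapunov:Stability}, taking as the Lyapunov function the optimal cost measured relative to the value at the optimal artificial reference. Concretely, in the error coordinate $\tilde{x}_k \doteq x_k - \xeo$, I would set
$$W(\tilde{x}_k) \doteq \lH^*(x_k; x_r, u_r) - \offsetCostH^\circ(x_r, u_r),$$
work on the feasibility region $\Gamma$ of the HMPC controller, which is positively invariant by Theorem \ref{theo:Recursive:Feasibility} and independent of $(x_r, u_r)$, and take $\Omega \subseteq \Gamma$ as a compact neighbourhood of $\xeo$.

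The core of the proof is a descent estimate derived from a shifted-solution construction analogous to (and making crucial use of) the one in the recursive-feasibility proof. Let $(\vv{x}^*, \vv{u}^*, \xH^*, \uH^*)$ be an optimizer of $\lH(x_k; x_r, u_r)$. I would define a candidate for $\lH(x_{k+1}; x_r, u_r)$ by $x_j^+ = x_{j+1}^*$ and $u_j^+ = u_{j+1}^*$ for $j \in \N_0^{N-2}$, $u_{N-1}^+ = \ue^* + \uc^*$, together with a \emph{rotated} artificial harmonic reference $\xe^+ = \xe^*$, $\ue^+ = \ue^*$, $\xs^+ = \xs^* \cos w - \xc^* \sin w$, $\xc^+ = \xs^* \sin w + \xc^* \cos w$, and the analogous formulas for $\us^+, \uc^+$. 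Four observations would then deliver both feasibility and descent: (a) constraints \eqref{eq:HMPC:xe}--\eqref{eq:HMPC:xc} are preserved by the rotation, and \eqref{eq:HMPC:xc} gives $A x_{N-1}^+ + B u_{N-1}^+ = \xe^+ + \xc^+$, so \eqref{eq:HMPC:xN} holds at $k{+}1$; (b) the rotation shifts the harmonic by one index, i.e. $x_{hj}^+ = x_{h,j+1}^*$ and $u_{hj}^+ = u_{h,j+1}^*$ for all $j \in \N$, so \eqref{ineq:HMPC:z:first} at $k{+}1$ follows from feasibility at $k$ together with Theorem \ref{theo:Recursive:Feasibility}, and \eqref{ineq:HMPC:z:minus}--\eqref{ineq:HMPC:z:plus} are rotation-invariant in $(z_s, z_c)$; (c) since $T_h$ and $S_h$ are diagonal, the offset cost \eqref{eq:HMPC:Offset:Cost} is rotation-invariant, giving $\offsetCostH(\xH^+, \uH^+; x_r, u_r) = \offsetCostH(\xH^*, \uH^*; x_r, u_r)$; (d) at $j=N-1$ the shifted stage cost vanishes because $x_{N-1}^+ = \xe^* + \xc^* = x_{h,N-1}^+$ and $u_{N-1}^+ = \ue^* + \uc^* = u_{h,N-1}^+$, as a direct expansion using (b) shows. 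Adding the stage costs at $j \in \N_0^{N-2}$, which coincide with those of the original solution at $j \in \N_1^{N-1}$, I would obtain the decrease inequality
$$\lH^*(x_{k+1}; x_r, u_r) \leq \lH^*(x_k; x_r, u_r) - \|x_k - \xhcero^*\|_Q^2 - \|u_0^* - \uhcero^*\|_R^2.$$

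To upgrade this to a strict decrease away from $\tilde{x}_k = 0$, I would show that the right-hand-side decrement vanishes only at $x_k = \xeo$. Vanishing forces $x_k = \xhcero^*$ and $u_0^* = \uhcero^*$; I would then argue by contradiction that $(\xH^*, \uH^*) = (\xHo, \uHo)$. If not, the convex interpolation $(\xH(\lambda), \uH(\lambda)) = (1-\lambda)(\xH^*, \uH^*) + \lambda(\xHo, \uHo)$ is feasible for \eqref{eq:HMPC:xe}--\eqref{ineq:HMPC:z:plus} by convexity of the constraint set; using the controllability hypothesis ($N$ exceeds the controllability index) one can track this interpolated harmonic from $x_k$ with a stage-cost penalty of order $O(\lambda^2)$, while strict convexity of $\offsetCostH$ together with the optimality of $(\xHo, \uHo)$ produces an offset-cost decrease linear in $\lambda$. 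For $\lambda > 0$ small enough the total cost strictly decreases, contradicting the optimality of $(\vv{x}^*, \vv{u}^*, \xH^*, \uH^*)$. Hence $(\xH^*, \uH^*) = (\xHo, \uHo)$, and by Lemma \ref{lemma:optimal:artificial:reference:HMPC} this forces $x_k = \xeo$.

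The three hypotheses of Theorem \ref{theo:Lyapunov:Stability} then follow: the lower $\cc{K}_\infty$ bound for $W$ is a consequence of the quadratic structure of $\costFunH$ and the uniqueness of $(\xHo, \uHo)$; the local upper $\cc{K}_\infty$ bound on $\Omega$ is obtained via controllability by exhibiting a feasible candidate whose cost is $O(\|\tilde{x}_k\|^2)$; and the strict descent condition is the combination of the decrease inequality above with the fixed-point argument. Theorem \ref{theo:Lyapunov:Stability} then delivers asymptotic stability of $\tilde{x}_k = 0$ in $\Gamma$, while closed-loop constraint satisfaction holds by Theorem \ref{theo:Recursive:Feasibility}(i). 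The main obstacle is the fixed-point step, which hinges on carefully balancing the $O(\lambda^2)$ stage-cost penalty obtained from the controllability construction against the $O(\lambda)$ decrease of the strictly convex offset cost along the interpolation.
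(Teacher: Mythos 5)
Your proposal is correct and follows essentially the same route as the paper: the same Lyapunov function $\lH^* - \offsetCostH^\circ$, the same rotated-and-shifted candidate solution for the descent inequality, the same dead-beat bound for the local upper estimate, and the same convex-interpolation argument (which the paper isolates as Lemma \ref{lemma:stability:x:is:optimal}) to show that the decrement can vanish only at $x = \xeo$. The only organizational difference is that you fold that fixed-point argument directly into condition \textit{(iii)} of Theorem \ref{theo:Lyapunov:Stability} to obtain strict decrease away from $\xeo$, whereas the paper proves the weaker bound $\Delta W \leq -\lambda_\text{min}(Q)\|x - \xhcero^*\|^2$ and then adds a separate limiting argument for attractivity; both are sound.
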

\begin{proof} \renewcommand{\qedsymbol}{}
See Appendix \ref{app:proof:stability:HMPC}.
\end{proof}

Note that, as stated in Theorem \ref{theo:HMPC:Stability}, the HMPC controller provides asymptotic convergence to an admissible steady state regardless of whether the reference is itself an admissible steady state or not. In fact, as is also the case with the MPCT controller \cite{Limon_TAC_2018}, it is clear from Lemma \ref{lemma:optimal:artificial:reference:HMPC} that the HMPC controller will converge to $(x_r, u_r)$ if it is admissible, and that it will converge to the admissible steady state $(x, u)$ that minimizes the distance $\| x - x_r \|_{T_e}^2 + \| u - u_r \|_{S_e}^2$ otherwise.

\section{Closed-loop comparison between the HMPC and MPCT controllers} \label{sec:HMPC:performance}

This section presents results of controlling a ball and plate system with the HMPC and MPCT controllers. The inclusion of the MPCT controller is done to highlight the fact that for certain systems, and especially for low values of the prediction horizon, its closed-loop performance can suffer due to the fact that the predicted state $x_N$ must reach an admissible steady state of the system (see constraint \eqref{eq:MPCT:Terminal}). The results with the HMPC controller, and our subsequent discussion in Section \ref{sec:simulation}, suggest that HMPC may provide a significant improvement over MPCT in this regard.

\subsection{Ball and plate system} \label{sec:ball:and:plate}

The ball and plate system consists of a plate that pivots around its center point such that its slope can be manipulated by changing the angle of its two perpendicular axes. The objective is to control the position of a solid ball that rests on the plate. We assume that the ball is always in contact with the plate and that it does not slip when moving. The non-linear equations of the system are \cite{Wang_ISA_2014},
\begin{subequations} \label{eq:BaP:nonlinear}
\begin{align}
    \ddot{\zb}_1 &= \frac{m}{m + I_b/r^2} \left( \zb_1 \dot{\theta}_1^2 + \zb_2 \dot{\theta}_1 \dot{\theta}_2 + g \sin{\theta_1} \right) \\
    \ddot{\zb}_2 &= \frac{m}{m + I_b/r^2} \left( \zb_2 \dot{\theta}_2^2 + \zb_1 \dot{\theta}_1 \dot{\theta}_2 + g \sin{\theta_2} \right),
\end{align}
\end{subequations}
where $m$, $r$ and $I_b$ are the mass, radius and mass moment of inertia of a solid ball, respectively; $\zb_1$ and $\zb_2$ are the position of the ball on the two axes of the plate relative to its center point; $\dot{\zb}_1$, $\dot{\zb}_2$, $\ddot{\zb}_1$ and $\ddot{\zb}_2$ their corresponding velocities and accelerations; $\theta_1$ and $\theta_2$ are the angle of the plate on each of its axes; and $\dot{\theta}_1$ and $\dot{\theta}_2$ their corresponding angular velocities.

The state of the system is given by
\begin{equation*}
    x = (\zb_1, \dot{\zb}_1, \theta_1, \dot{\theta}_1, \zb_2, \dot{\zb}_2, \theta_2, \dot{\theta}_2),
\end{equation*}
and the control input $u = (\ddot{\theta}_1, \ddot{\theta}_2)$ is the angle acceleration of the plate in each one of its axes. We consider the following constraints on the velocity, angles and control inputs,
\begin{align*}
    |\dot{\zb}_i| \leq 0.5\,\text{m/s}^2, \;
    |\theta_i| \leq \frac{\pi}{4}\,\text{rad}, \;
    |\ddot{\theta}_i| \leq 0.4\,\text{rad/s}^2, \; i \in\N_1^2.
\end{align*}

A linear time-invariant discrete-time model \eqref{eq:Model} of the system is obtained by linearizing its non-linear equations taking the origin as the operating point and discretizing with a sample time of $0.2$s.
We use this linear model as the prediction model of the MPC controllers as well as the model used to simulate the system. We do so to illustrate the properties of the HMPC controller under nominal conditions, since Theorems \ref{theo:Recursive:Feasibility} and \ref{theo:HMPC:Stability} consider this premise.
We take $m = 0.05\,$Kg, $r = 0.01\,$m, $g = 9.81\,$m/s$^2$ and $I_b = (2/5) m r^2 = 2\cdot 10^{-6}$Kg$\cdot$m$^2$.

\subsection{Performance comparison between HMPC and MPCT} \label{sec:simulation}

\begin{figure*}[t]
    \centering
    \begin{subfigure}[ht]{0.48\textwidth}
        \includegraphics[width=\linewidth]{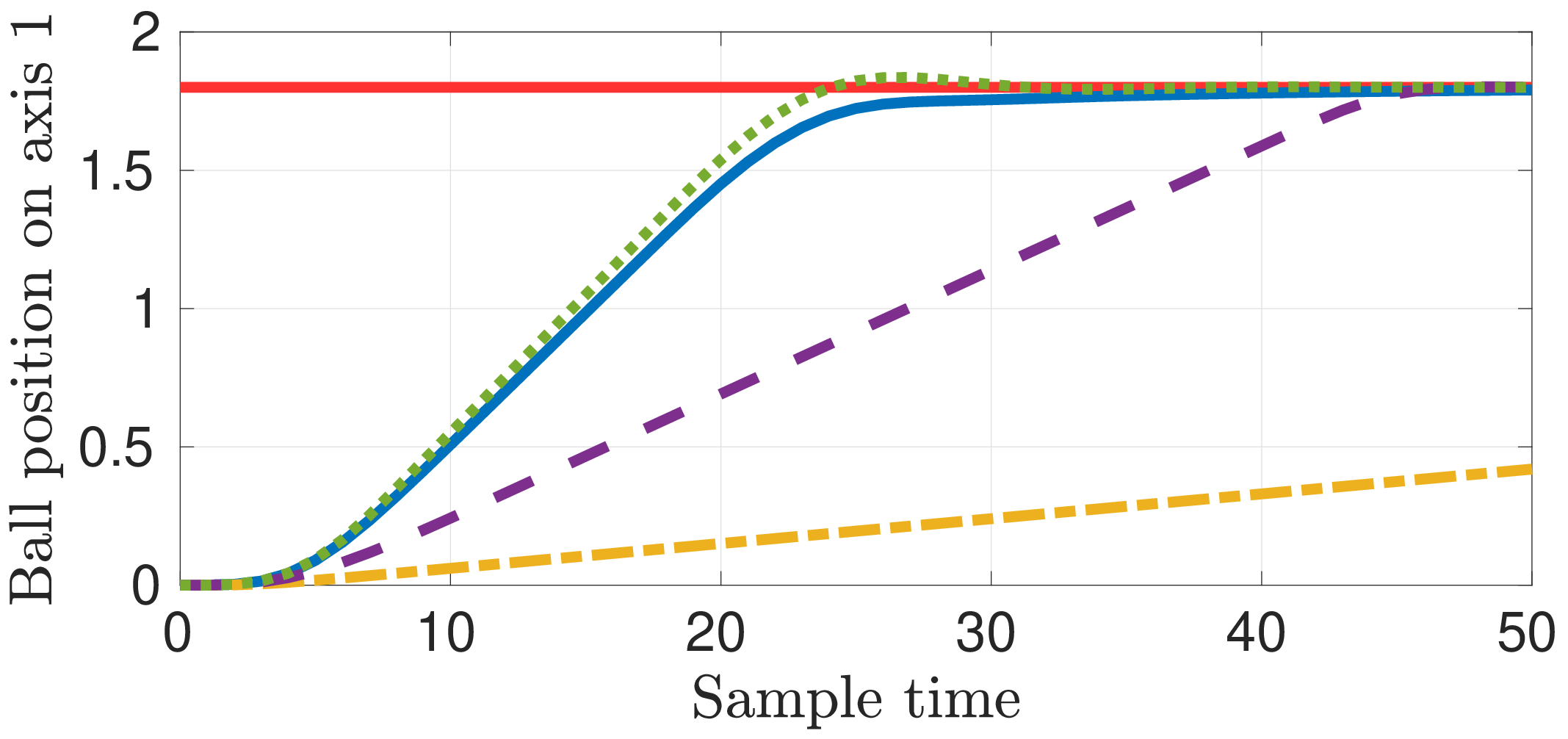}
        \caption{Position of ball on axis 1.}
        \label{fig:comparison:position}
    \end{subfigure}%
    \hfill
    \begin{subfigure}[ht]{0.48\textwidth}
        \includegraphics[width=\linewidth]{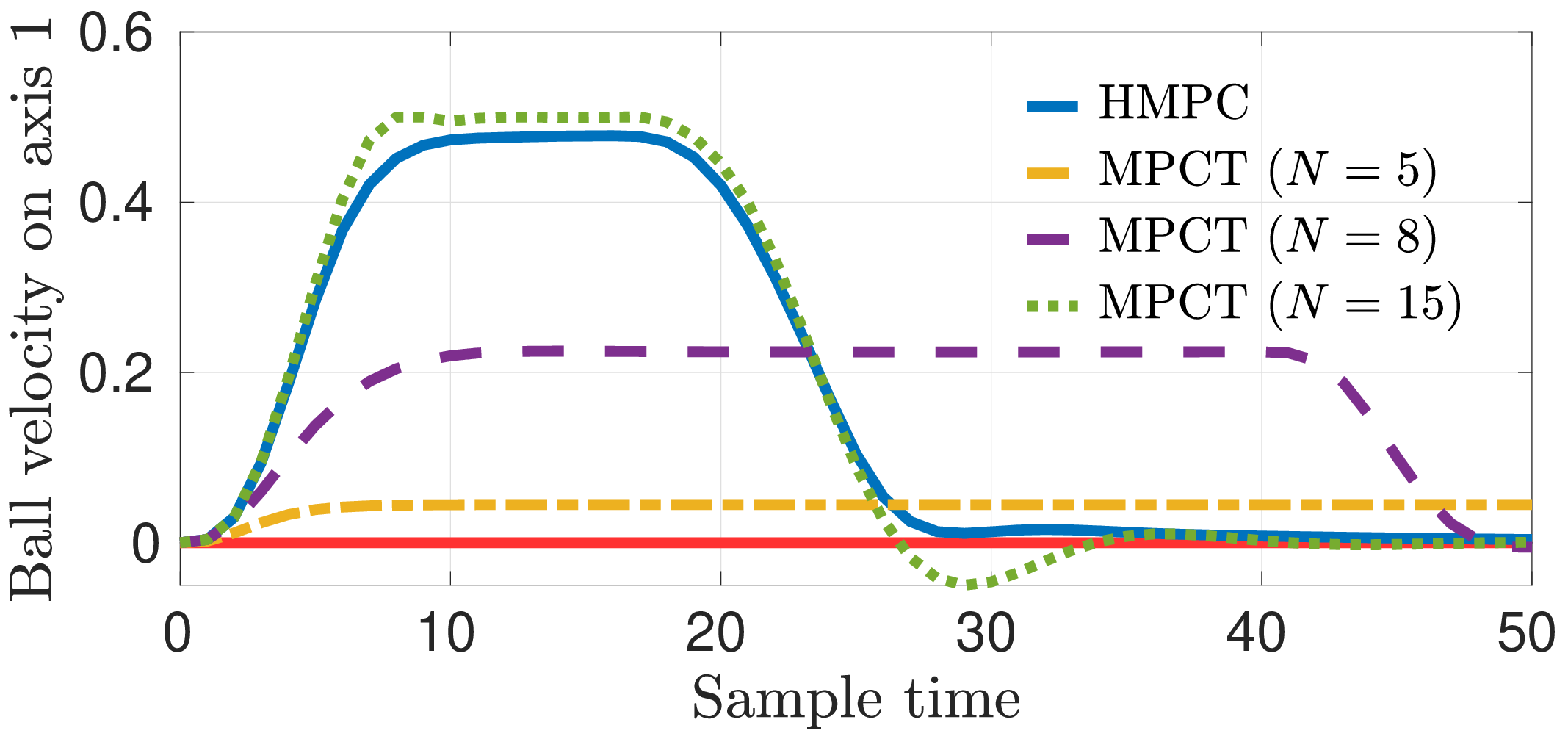}
        \caption{Velocity of ball on axis 1.}
        \label{fig:comparison:velocity}
    \end{subfigure}%

    \begin{subfigure}[ht]{0.48\textwidth}
        \includegraphics[width=\linewidth]{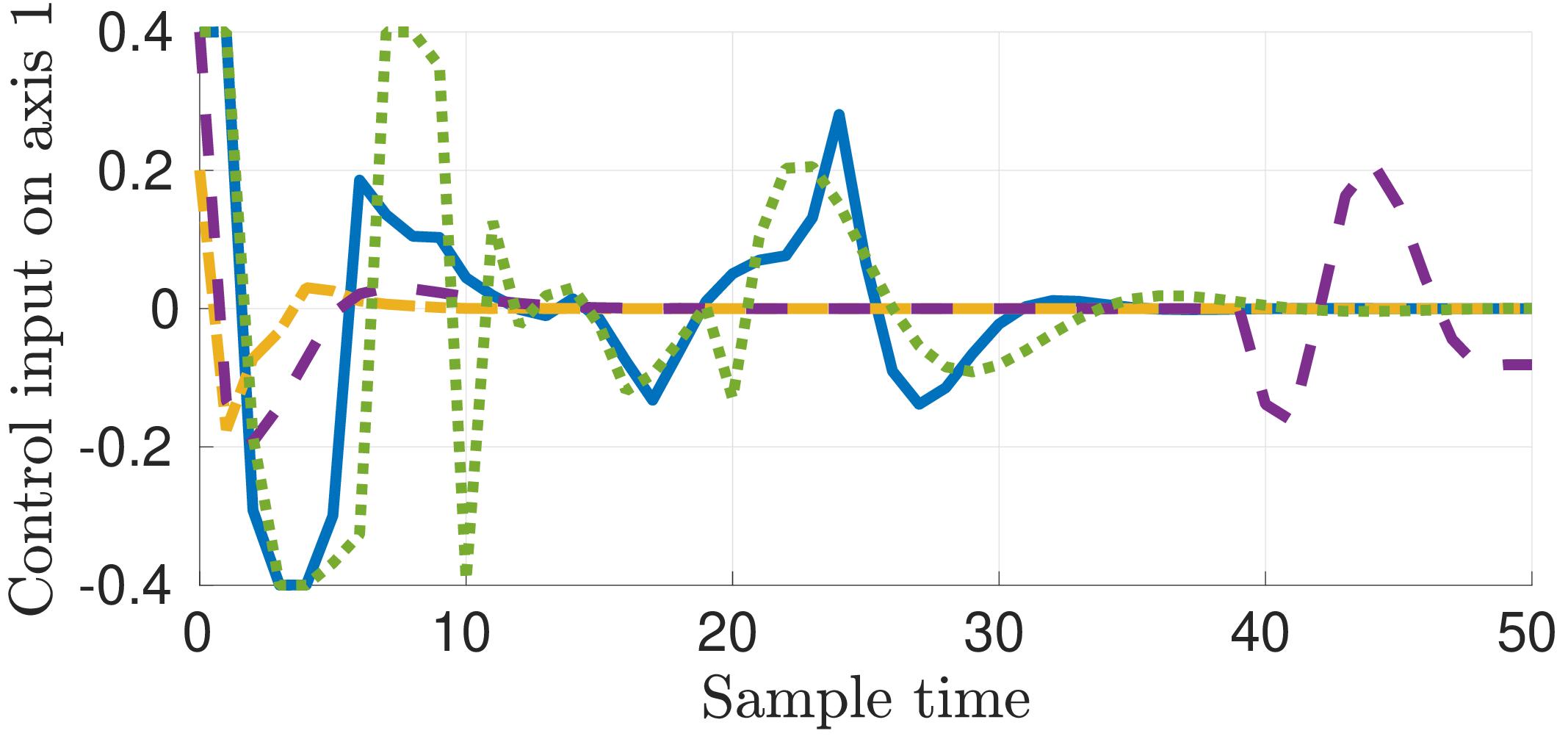}
        \caption{Control input on axis 1.}
        \label{fig:comparison:input}
    \end{subfigure}%
    \hfill
    \begin{subfigure}[ht]{0.48\textwidth}
        \includegraphics[width=\linewidth]{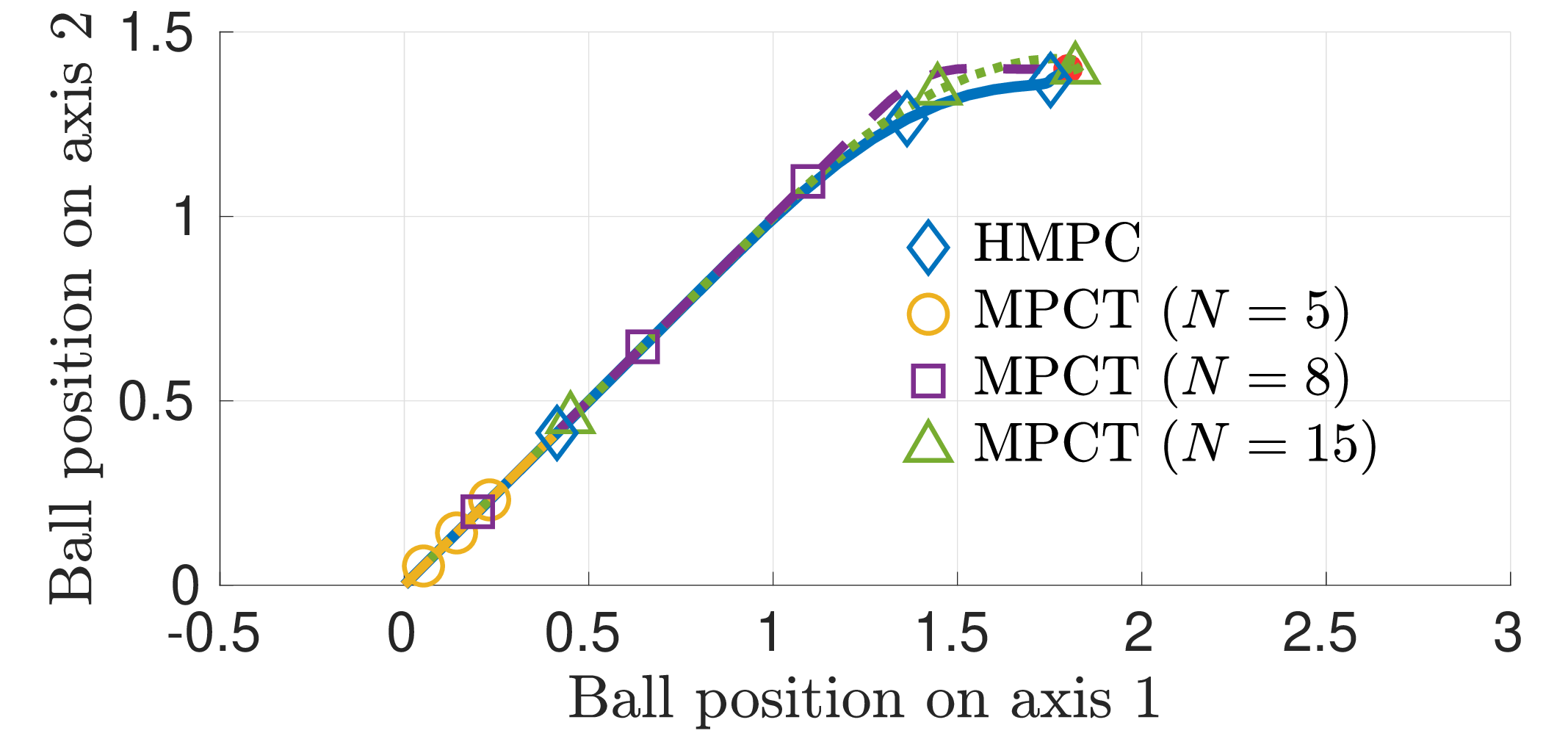}
        \caption{Position of ball on the plate.}
        \label{fig:comparison:plate}
    \end{subfigure}%
    \caption{Closed-loop comparison between HMPC and MPCT.}
    \label{fig:comparison}
\end{figure*}

\begin{figure}[t]
    \centering
    \includegraphics[width=\columnwidth]{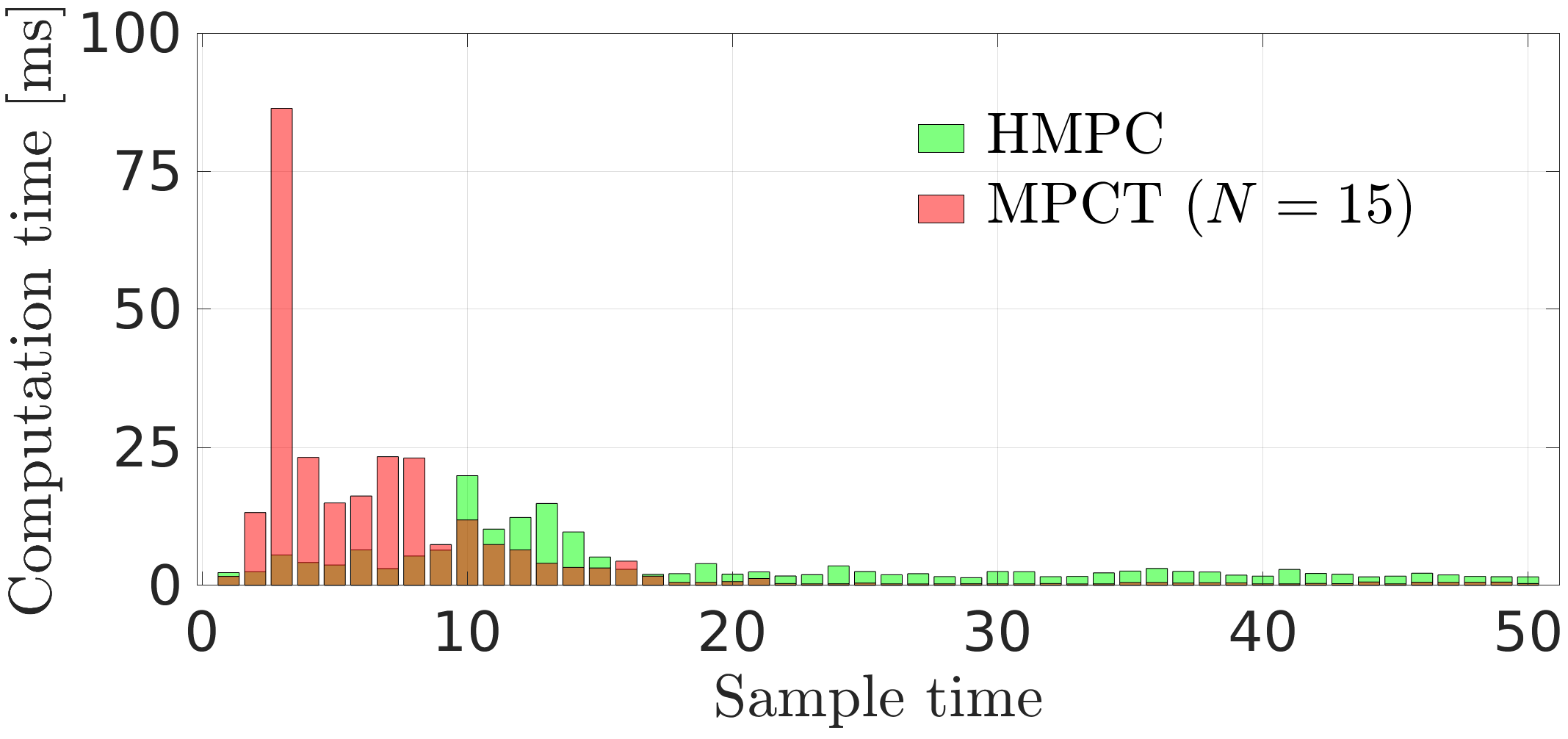}
    \caption{Computation times of the COSMO and OSQP solvers.}
    \label{fig:computation:times}
\end{figure}

\begin{figure*}[t]
    \centering
    \begin{subfigure}[ht]{0.48\textwidth}
        \includegraphics[width=\linewidth]{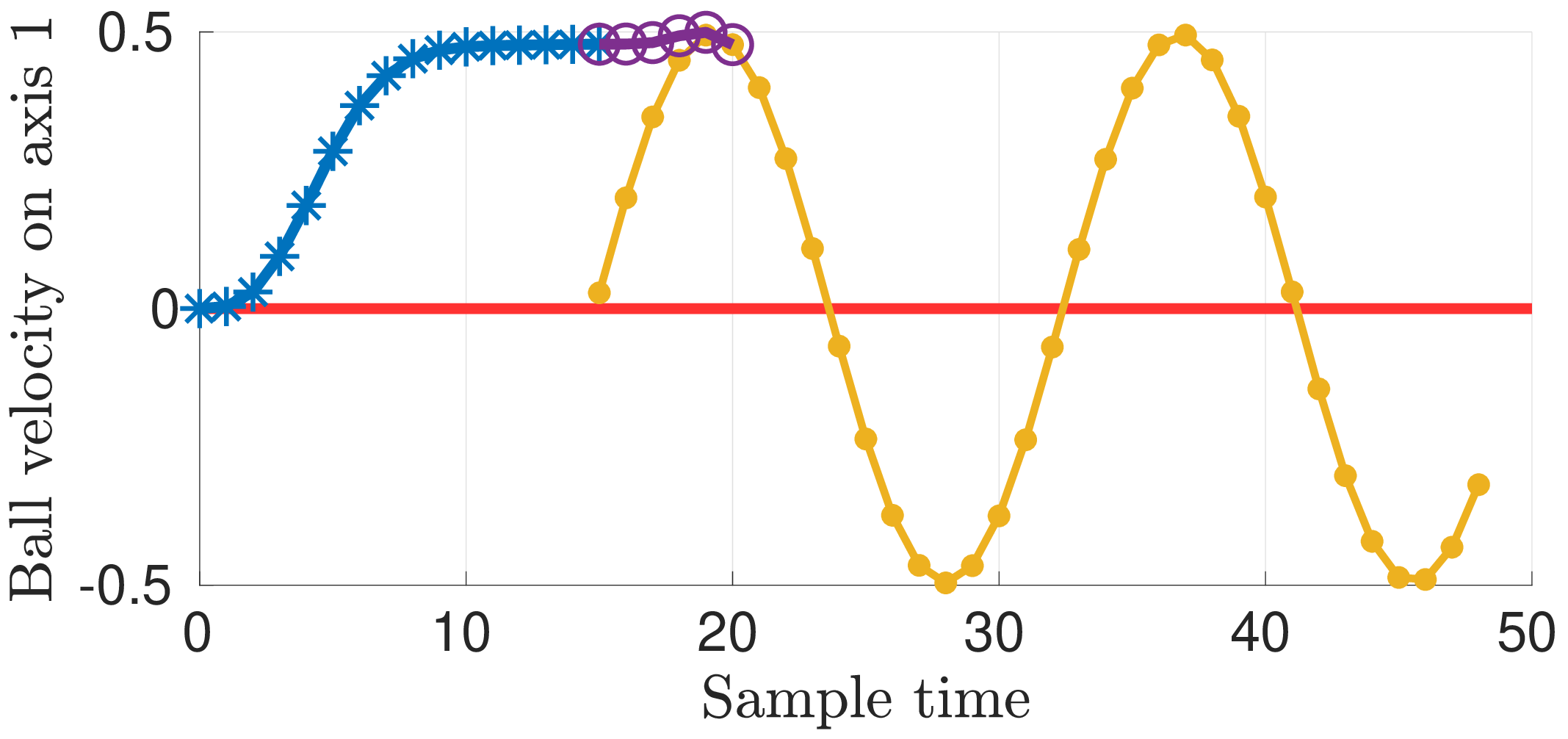}
        \caption{HMPC: Velocity of ball on axis 1.}
        \label{fig:iter:HMPC:vel}
    \end{subfigure}%
    \hfill
    \begin{subfigure}[ht]{0.48\textwidth}
        \includegraphics[width=\linewidth]{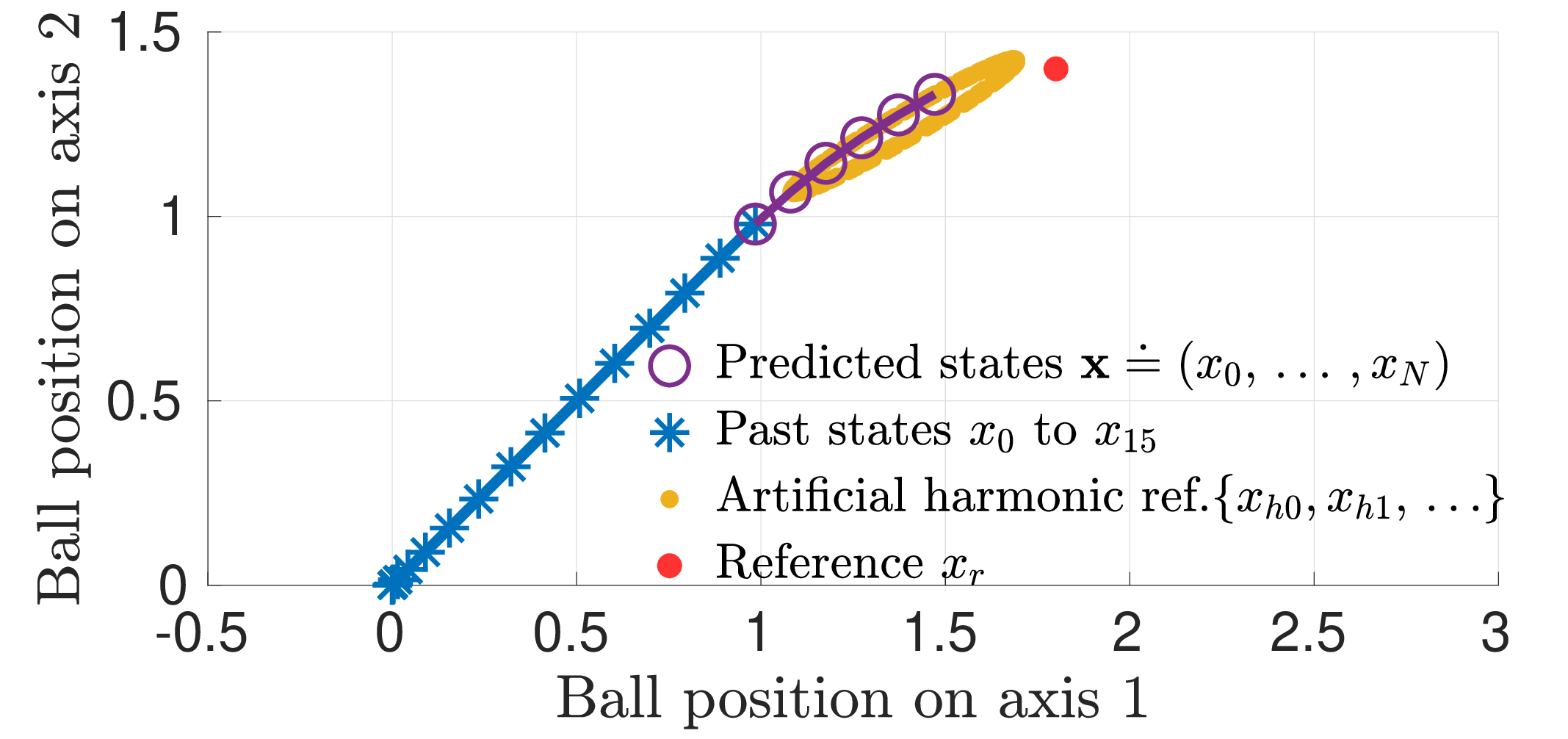}
        \caption{HMPC: Position of ball on plate.}
        \label{fig:iter:HMPC:pos}
    \end{subfigure}%

    \begin{subfigure}[ht]{0.48\textwidth}
        \includegraphics[width=\linewidth]{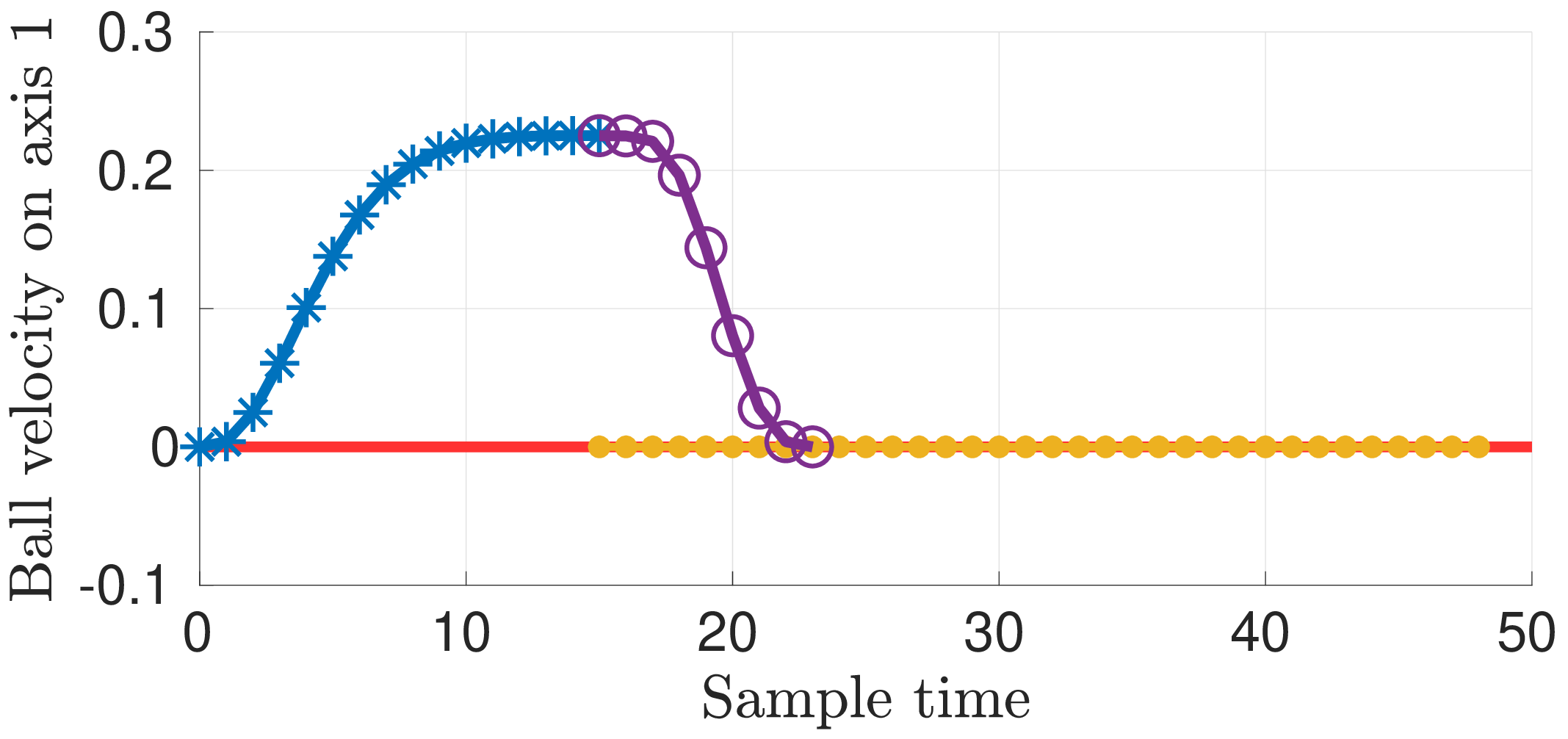}
        \caption{MPCT: Velocity of ball on axis 1.}
        \label{fig:iter:MPCT:vel}
    \end{subfigure}%
    \hfill
    \begin{subfigure}[ht]{0.48\textwidth}
        \includegraphics[width=\linewidth]{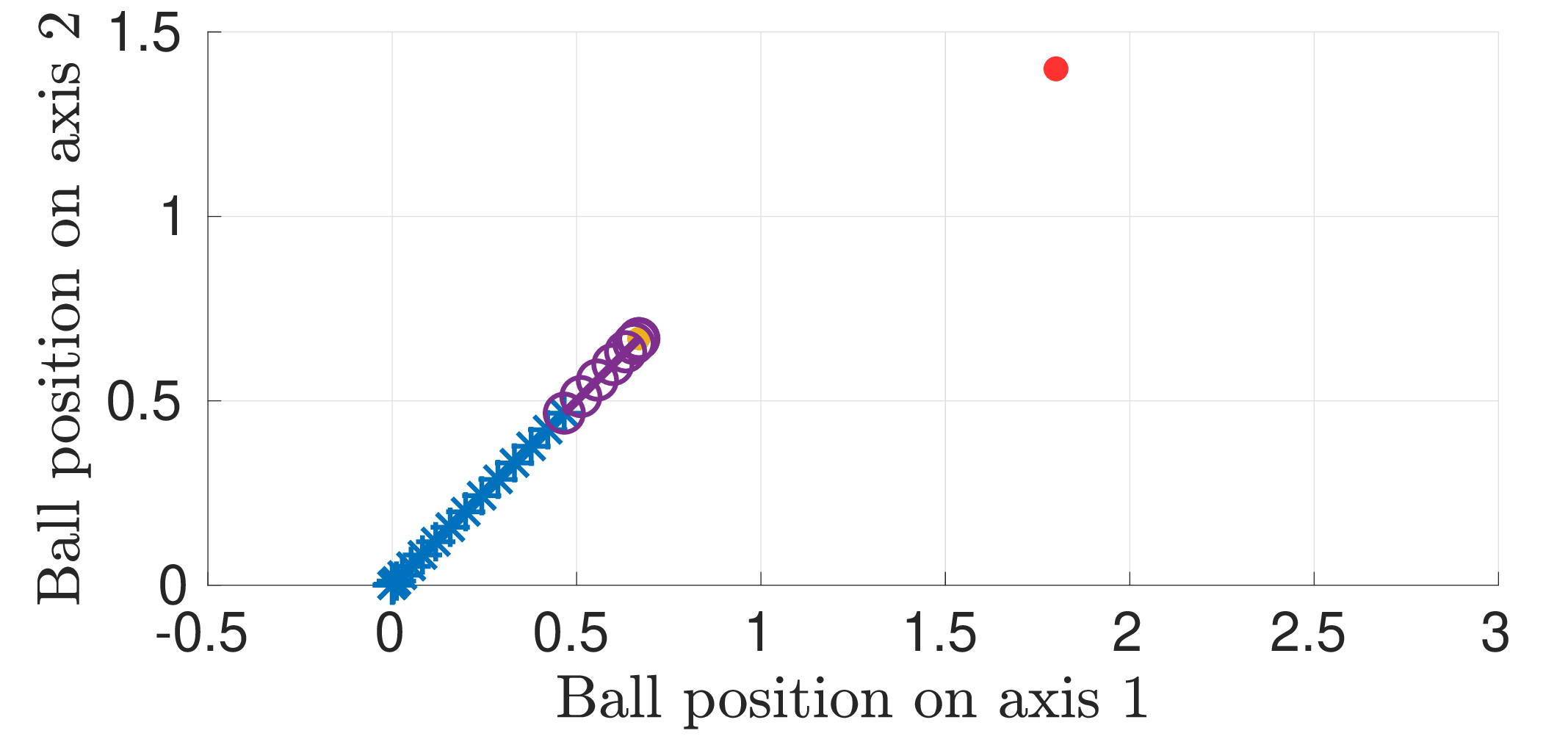}
        \caption{MPCT: Position of ball on plate.}
        \label{fig:iter:MPCT:pos}
    \end{subfigure}%
    \caption{Snapshot of HMPC and MPCT at iteration 15.}
    \label{fig:iter}
\end{figure*}

We perform a closed-loop simulation of the ball and plate system with the MPCT and HMPC controllers. The system is initialized at the origin and the objective is to steer it to the position $\zb_1 = 1.8$, $\zb_2 = 1.4$, i.e. 
$$x_r = (1.8, 0, 0, 0, 1.4, 0, 0, 0), \; u_r = (0, 0).$$

The HMPC controller is solved using version v0.7.5 of the COSMO solver \cite{Garstka_COSMO_ECC_2019}, while the MPCT controller is solved using version 0.6.0 of the OSQP solver \cite{Stellato_OSQP}. These two solvers employ the same operator splitting approach, based on the alternating direction method of multipliers \cite{Boyd_FTML_2011}. In fact, their algorithms are very similar, with OSQP being particularized to QP problems. The settings of both solvers are set to their default values with the exception of the tolerances \texttt{eps\_abs}, \texttt{eps\_rel}, \texttt{eps\_prim\_inf} and \texttt{eps\_dual\_inf}, which are set to $10^{-4}$.

The parameters of the controllers, which where manually tuned to provide an adequate closed-loop performance, are described in Table \ref{tab:MPC:parameters}. We compare the HMPC controller with $N=5$ to three MPCT controllers with prediction horizons $N= 5, 8, 15$. The prediction horizon $N=15$ was chosen by finding the lowest value for which the MPCT performed well. The performance is measured as
$$\Phi \doteq \Sum{k=1}{N_\text{iter}} \| x_k - x_r \|^2_Q + \| u_k - u_r \|^2_R,$$
where $x_k$, $u_k$ are the states and control actions throughout the simulation and $N_\text{iter} = 50$ is the number of sample times. Table \ref{tab:performance:index:comparison} shows the performance index for each one of the controllers.

{\renewcommand{\arraystretch}{1.2}%
    \begin{table}[t]
    \centering
\begin{threeparttable}
    \caption{Parameters of the controllers}
    \label{tab:MPC:parameters}
    \begin{tabular}{llll}
        \toprule
        Parameter        & \multicolumn{3}{l}{Value} \\
        \midrule
        $Q$ & \multicolumn{3}{l}{\textit{diag}$( 10, 0.05, 0.05, 0.05, 10, 0.05, 0.05, 0.05)$} \\
        $T_e$ & \multicolumn{3}{l}{\textit{diag}$( 600, 50, 50, 50, 600, 50, 50, 50)$} \\
        \midrule
        Parameter        & Value                      & Parameter    & Value                    \\
        \midrule
        $R$              & \textit{diag}$(0.5, 0.5)$  & $S_e$        & \textit{diag}$(0.3, 0.3)$ \\
        $T_h$            & $T_e$                      & $S_h$        & $0.5 S_e$                 \\
        $T_a$            & $T_e$                      & $S_a$        & $S_e$                 \\
        $N$              & $5$ (HMPC), $8$ and $15$ & $\epsilon$   & $( 10^{-4}, 10^{-4}, 10^{-4})$\\
        $w$              & $0.3254$ & & \\
        \bottomrule
    \end{tabular}
    \begin{tablenotes}[flushleft] \footnotesize
    \item \textit{diag}$(\cdot)$ denotes a diagonal matrix with the indicated elements.
    \end{tablenotes}
\end{threeparttable}
\end{table}}

{\renewcommand{\arraystretch}{1.1}%
    \begin{table}[t]
    \centering
    \caption{Performance comparison between controllers}
    \label{tab:performance:index:comparison}
    \begin{tabular}{ccccc}
        \toprule
        Controller & \multicolumn{3}{c}{MPCT} & HMPC \\
        \cmidrule(lr){2-4}\cmidrule(lr){5-5}
        Prediction horizon $(N)$ & 5 & 8 & 15 & 5 \\
        \midrule
        Performance $(\Phi)$ & $2014.03$ & $844.16$ & $488.88$  & $511.09$ \\
        \bottomrule
    \end{tabular}
\end{table}}

Figure \ref{fig:comparison} shows the closed-loop simulation results for each controller. Figures \ref{fig:comparison:position} and \ref{fig:comparison:velocity} show the position and velocity of the ball on axis 1, i.e. $\zb_1$ and $\dot{\zb}_1$, respectively. Figure \ref{fig:comparison:input} shows the control input on axis 1, i.e. $\ddot{\theta}_1$. Finally, Figure \ref{fig:comparison:plate} shows the trajectory of the ball on the plate. The markers indicate the position of the ball at sample times $10$, $20$ and $30$ for each one of the controllers. The computation times of the HMPC controller and the MPCT controller with the prediction horizon $N=15$ are shown in Figure \ref{fig:computation:times}. As can be seen, the COSMO solver applied to the HMPC problem provides computation times that are reasonable when compared to the results of the OSQP solver applied to the MPCT problem; in spite of the fact that the OSQP solver is particularized to QP problems, whereas the COSMO solver is not particularized to the second order cone programming problem \eqref{eq:HMPC}. Our intent with this figure is to show that, even though \eqref{eq:HMPC} is a more complex problem than \eqref{eq:MPCT} due to the inclusion of second order cone constraints, it can still be solved in reasonable times using state of the art solvers. We note that COSMO runs on the \textit{Julia} programming language, while OSQP is programmed in \textit{C} and executed using its Matlab interface.

Notice that the velocities obtained with the MPCT controllers with small prediction horizons are far away from its upper bound of $0.5$. The HMPC controller, on the other hand, reached much higher velocities even though its prediction horizon is also small. This results in a much faster convergence of the HMPC controller, as can be seen in Figures \ref{fig:comparison:position} and \ref{fig:comparison:plate}. If the prediction horizon of the MPCT controller is sufficiently large (e.g. $N=15$), then this issue no longer persists.

To understand why this happens, let us compare the solution of the HMPC controller with the MPCT controller with ${N=8}$. Figure \ref{fig:iter} shows a snapshot of sample time $15$ of the same simulation shown in Figure \ref{fig:comparison}. Lines marked with an asterisk are the past states from iteration $k = 0$ to the \textit{current} state at iteration $k = 15$, those marked with circumferences are the predicted states $\vv{x}$ for $j\in\N_0^N$, and those marked with dots are the artificial reference. The position of the markers line up with the value of the signals at each sample time, e.g. each asterisk marks the value of the state at each sample time $k\in\N_0^{15}$. Figures \ref{fig:iter:HMPC:vel} and \ref{fig:iter:MPCT:vel} show the velocity $\dot{\zb}_1$ of the ball on axis 1 for the HMPC and MPCT controllers, respectively. Figures \ref{fig:iter:HMPC:pos} and \ref{fig:iter:MPCT:pos} show the position of the ball on the plate.

The reason why the velocity does not exceed ${\approx}{0.2}$ with the MPCT controller can be seen in Figure \ref{fig:iter:MPCT:vel}. The predicted states of the MPCT controller must reach a steady state at $j = N$ (see constraint \eqref{eq:MPCT:Terminal}). In our example this translates into the velocity having to be able to reach $0$ within a prediction window of length $N=8$. This is the reason that is limiting the velocity of the ball. A velocity of $0.5$ is not attainable with an MPCT controller with a prediction horizon of $N=8$ because there are no admissible control input sequences $\vv{u}$ capable of steering the velocity from $0.5$ to $0$ in $8$ sample times. This issue does not occur with the HMPC controller because it does not have to reach a steady state at the end of the prediction horizon, as can be seen in Figure \ref{fig:iter:HMPC:vel}. Instead, it must reach an admissible ``steady state" harmonic reference, which can have a non-zero velocity.

It is clear from this discussion, and the results of the MPCT controller with $N=15$, that this issue will become less and less pronounced as the prediction horizon is increased. However, for low values of the prediction horizon, the HMPC controller can provide a significantly better performance than the MPCT controller, as shown in the example presented here.

Figure \ref{fig:simulation:nonlinear} illustrates the behaviour of the HMPC controller when simulating the system using the non-linear model \eqref{eq:BaP:nonlinear}. In this test, the system is started in the origin. Then, each $50$ sample times, the reference is changed to each one of the vertices of a regular pentagon inscribed in the unit circle (in clockwise order starting from the uppermost vertex). Additionally, Gaussian noise with a standard deviation of $0.01$ is added to the position of the ball given to the HMPC controller as the current system state \eqref{eq:HMPC:cond:inic}. This is done to simulate a small amount of \textit{measurement} noise. The other states are left undisturbed for the sake of simplicity.

\begin{figure}[t]
    \centering
    \includegraphics[width=\columnwidth]{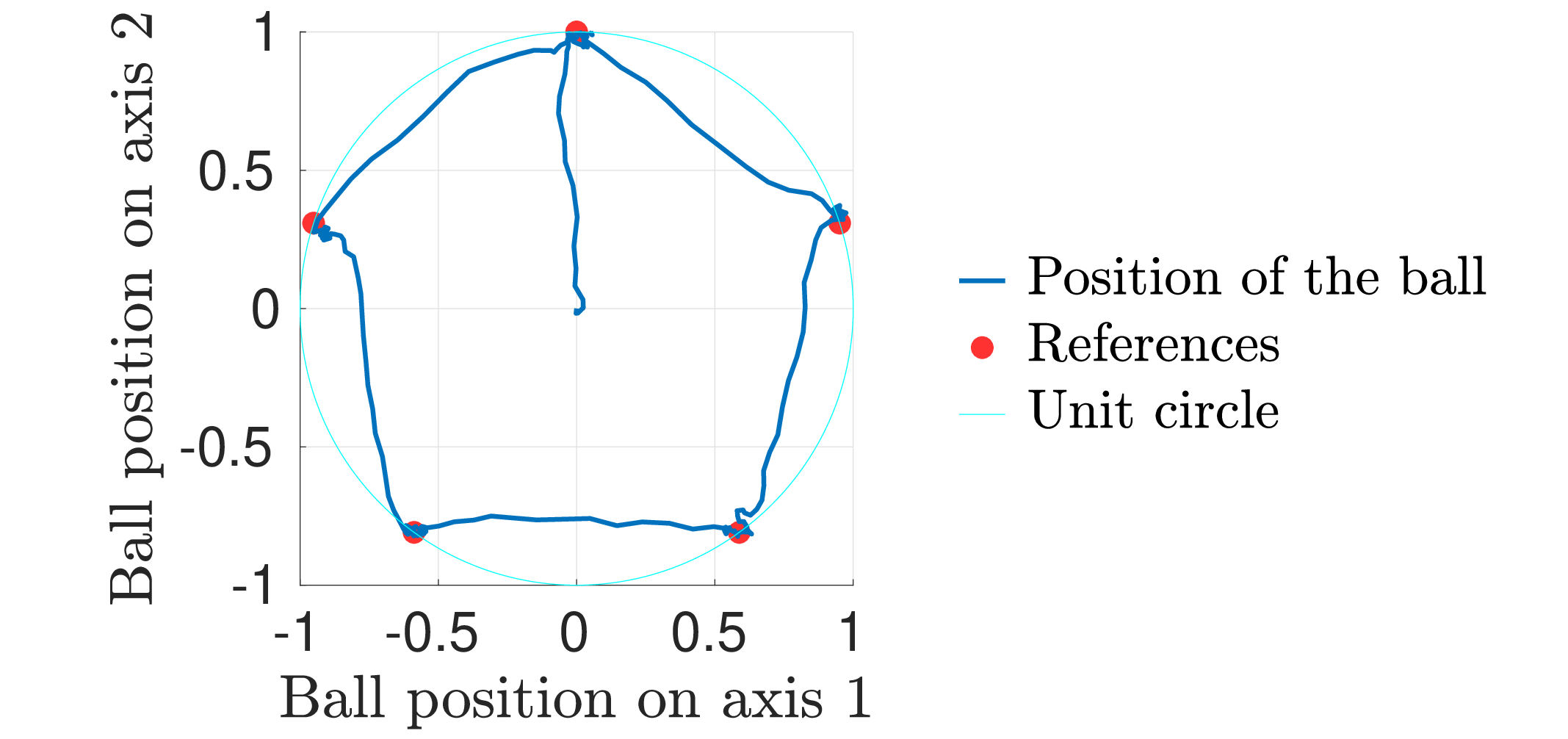}
    \caption{Closed-loop simulation of the non-linear model with the HMPC controller.}
    \label{fig:simulation:nonlinear}
\end{figure}

\begin{remark} \label{rem:MPCT:issue} 
    The performance advantages of a (suitably tuned) HMPC are especially noticeable if the system presents integrator states and/or slew rate constraints, as is the case of the example shown above. However, the issue that affects the performance of the MPCT controller is that the state cannot ``move far away" from the subspace of steady states of the system due to the presence of input constraints coupled with the low prediction horizon. As such, the performance advantage of the HMPC controller may still be present in a wider range of systems. Moreover, the HMPC controller can be viewed as an MPCT with an added degree of freedom.
    Therefore, it can always be tuned to behave like the MPCT controller, by taking, for instance, $w = 2 \pi$. Additionally, due to the added degree of freedom, the HMPC controller can provide an enlargement of the domain of attraction with respect to the MPCT controller, as illustrated in \cite{Krupa_CDC_19}.
\end{remark}

\section{Practical selection of parameter $w$} \label{sec:selection:w}

This section discusses the selection of parameter $w$ of the HMPC controller, providing a simple, intuitive approach for its selection. It is important to note that the stability and recursive feasibility of the controller are satisfied for any value of $w$. However, the performance of the controller can be improved by a proper selection of this parameter. 

There are two main considerations to be made. The first one is related to the phenomenon of aliasing and of the selection of the sampling time for continuous-time systems. This will provide an upper bound to $w$. The second one is related to the frequency response of linear systems, which will provide some insight into the selection of an initial, and well suited, value of $w$. Subsequent fine tuning may provide better results, but this initial value of $w$ should work well in practice and provide a good starting point.

\subsection{Upper bound of $w$} \label{sec:selection:w:upper:bound}

The signals \eqref{eq:harmonic:signals} parametrized by any feasible solution $(\xH, \uH)$ of \eqref{eq:HMPC} satisfy the discrete-time system dynamics, as discussed in Section \ref{sec:HMPC}. Therefore, all that remains is to select $w$ small enough such that signals \eqref{eq:harmonic:signals} describe a suitably sampled signal.

In order to prevent the aliasing phenomenon, $w$ must be chosen below the Nyquist frequency for anti-aliasing, i.e. ${w < \pi}$ \cite{Shannon_1949}. However, since the inputs are applied using a zero order holder, we would recommend taking
\begin{equation} \label{eq:selection:w:upper:bound}
    w \leq \frac{\pi}{2}.
\end{equation}

In any case, the stability and recursive feasibility of the controller will not be lost, since Theorems \ref{theo:Recursive:Feasibility} and \ref{theo:HMPC:Stability} do not make any assumptions on the value of $w$, but the benefits of using HMPC instead of MPCT may be lost if this bound is not respected. Indeed, for $w = 2\pi$, HMPC is identical to MPCT.

\subsection{Selection of a suitable $w$} \label{sec:selection:w:bode}

There are three additional considerations to be made for selecting an adequate $w$: \textit{(i)} high frequencies equate fast system responses, \textit{(ii)} high frequencies tend to have small input-to-state gains, and \textit{(iii)} the presence of state constraints.

At first glance, it would seem that selecting a high value of $w$ would lead to fast system responses. However, this need not be the case, since the gain of the system tends to diminish as the frequency of the input increases, i.e. if $w$ is selected in the \textit{high frequency} band of the system. If the gain is low, then $\xh$ is very similar to a constant signal of value $\xe$, which results in HMPC behaving very similarly to the MPCT. Therefore, $w$ should be selected taking into account the gain of the system for that frequency.

A tentative lower bound for $w$ is then the highest frequency of the \textit{low frequency} band of the system. However, a final consideration can be made with regard to the system constraints as follows: the presence of constraints can override the desire for frequencies with large system gains. For instance, take as an example a system with a static gain of $4$ with an input $u$ subject to $|u| \leq 1$ and a state $x$ subject to $|x| \leq 2$. Then, selecting a $w$ whose Bode gain is close to the static gain of the system is not desirable because the amplitude of $\uh$ will be limited by the constraints on $\xh$. Therefore, we can select a higher frequency. In this case, a proper selection might be to chose $w$ as the frequency whose Bode gain is $2$.

\begin{figure}[t]
    \centering
    \includegraphics[width=\columnwidth]{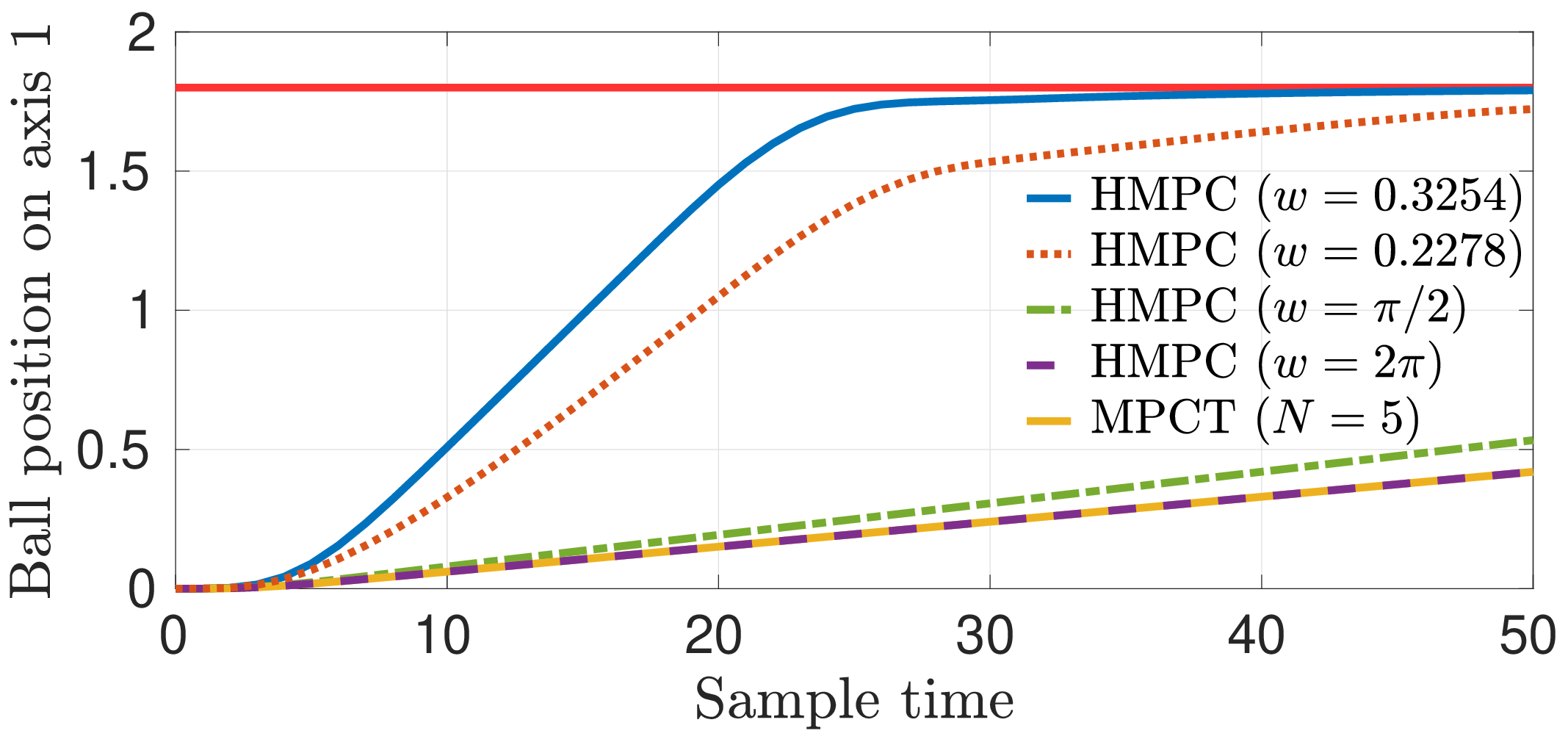}
    \caption{Closed-loop simulation for different realizations of $w$.}
    \label{fig:selection:w}
\end{figure}

\begin{remark} \label{rem:selection:w:MIMO}
    If the system has multiple states/inputs, then the above considerations should be made extrapolating the idea to the frequency response of MIMO systems. One approach in this case is to focus on the slow dynamics (states) of the system, which are the most restrictive, in that they may require higher prediction horizons in order to be able to reach steady states. Additionally, it is also useful to identify if the system has any integrator states and to take into account their constraints as described in the above discussion.
\end{remark}

\begin{example} \label{example:w_selection}
\normalfont
As an example, take the case study of Section \ref{sec:HMPC:performance}, which is a MIMO system. Figure \ref{fig:selection:w} shows the closed-loop simulation of HMPC controllers using the parameters from Table \ref{tab:MPC:parameters} but different realizations of $w$. For the results shown in Section \ref{sec:simulation}, $w = 0.3254$ was selected as the cutting frequency of the Bode plot from $u$ to $\dot{z}$. It was chosen this way because of the constraints $|u| \leq 0.4$ and $|\dot{z}| \leq 0.5$. As shown in the figure, choosing a lower $w$, such as $w = 0.7 \cdot 0.3254 = 0.2278$ would have resulted in a higher gain, which would be pointless due to these constraints, and an overall slower convergence due to the slower frequency and the smaller amplitude of $\{\uh\}$. On the other hand, choosing a higher $w$, such as $w = \pi/2$, leads to a small frequency gain. This results in a harmonic reference signal $\{\xh\}$ that is very similar to a constant signal, leading to a poor performance. Finally, we show one of the possible undesirable effects of choosing a $w$ that does not satisfy \eqref{eq:selection:w:upper:bound}. In this case, selecting $w = 2 \pi$ makes the HMPC controller identical to the MPCT controller with the same prediction horizon. We should note that all the simulations shown in Figure \ref{fig:selection:w} eventually converge to the reference.
\end{example}

\section{Conclusions} \label{sec:conclusions}

This paper presents a novel MPC formulation for tracking piecewise constant references that can significantly outperform other MPC formulations in the case of small prediction horizons, as well as provide a larger domain of attraction. This is due to the fact that the terminal state does not need to reach a steady state of the system, but instead just needs to reach a periodic trajectory of the system given by a single harmonic signal. 
We find that the performance advantage is especially noticeable in systems with integrators and/or slew rate constraints, which are very typical in robotic applications.

Additionally, the controller does not require the computation of a terminal set, and its recursive feasibility (and asymptotic stability) is guaranteed even in the event of a reference change. These properties are welcome in any setting, but particularly so when dealing with embedded systems.

The computation times needed to solve the HMPC problem with the COSMO solver suggest that its online implementation in embedded systems might be attainable, especially if a specialized solver is developed.

\newpage

\begin{appendix}

\subsection{Collection of properties} \label{app:properties}

This section contains three properties from the appendix of \cite{Krupa_CDC_19} which are used in various of the proofs of this manuscript. They are included here for completeness.

\begin{property} \label{prop:simple:armonics} 
Let the elements $v_\ell \inR{n_v}$ of a sequence $\{v\}$ be given by
\begin{equation*}
    v_\ell = v_e + v_s \sin(w \ell) + v_c \cos(w \ell),\; \forall \ell\in \N,
\end{equation*}
where $w\in \R$ and $v_e, v_s, v_c \inR{n_v}$.
Then,
\begin{equation*}
v_{\ell+1} = v_e + v_s^+ \sin(w \ell) + v_c^+ \cos(w \ell), \; \forall \ell \in \N,
\end{equation*}
where
\begin{align*}
    v_s^+ &= v_s \cos(w) - v_c \sin(w), \\
    v_c^+ &= v_s \sin(w) + v_c \cos(w).
\end{align*}
Moreover,
\begin{equation*}
(v_{s (i)}^+)^2 + (v_{c (i)}^+)^2 = v_{s (i)}^2 + v_{c (i)}^2, \; i\in\N_1^{n_v}.
\end{equation*}
\end{property}

\begin{proof} 
The proof relies on the following well-known trigonometric identities
\begin{align*}
\sin(\alpha+\beta) &= \sin(\alpha)\cos(\beta)+ \cos(\alpha)\sin(\beta) \\
\cos(\alpha+\beta) &= \cos(\alpha)\cos(\beta)-\sin(\alpha)\sin(\beta).
\end{align*}
From these expressions we obtain
\begin{align*}
    \sin(w (\ell+1)) &= \sin(w) \cos(w \ell) + \cos(w) \sin(w \ell) \\
    \cos(w (\ell+1)) &= \cos(w) \cos(w \ell) - \sin(w) \sin(w \ell).
\end{align*}
Therefore,
\begin{align*}
    v_{\ell+1}&= v_e + v_s \sin(w(\ell+1)) + v_c \cos(w(\ell+1)) \\
             &= v_e + v_s \left[ \sin(w)\cos(w\ell) + \cos(w)\sin(w\ell) \right]\\
             &\quad + v_c \left[ \cos(w)\cos(w\ell) - \sin(w)\sin(w\ell) \right] \\
             &= v_e + \left[ v_s \cos(w) - v_c \sin(w) \right] \sin(w\ell) \\
             &\quad + \left[ v_s \sin(w) + v_c \cos(w) \right] \cos(w\ell) \\
             &= v_e + v_s^+ \sin(w\ell) + v_c^+ \cos(w\ell).
\end{align*}
This proves the first claim of the property. Denote now
\begin{equation*}
\rm{H}_w \doteq \bmat{cc} \cos(w) & - \sin(w) \\ \sin(w) & \cos(w) \emat.
\end{equation*}
With this notation,
\begin{equation*}
\bv v_{s (i)}^+ \\ v_{c (i)}^+ \ev = \rm{H}_w \bv v_{s (i)} \\ v_{c (i)} \ev, \; i\in\N_1^{n_v}.
\end{equation*}
From the identity $\sin^2(w)+\cos^2(w)=1$ we obtain
\begin{equation*}
\rm{H}_w\T \rm{H}_w = I_2.
\end{equation*}
We are now in a position to prove the last claim of the property.
\begin{align*}
    &(v_{s (i)}^+)^2+ (v_{c (i)}^+)^2 = \left\| \bv v_{s (i)}^+ \\ v_{c (i)}^+ \ev \right\|^2 \\
    & = \bv v_{s (i)} \\ v_{c (i)} \ev\T \rm{H}_w\T \rm{H}_w \bv v_{s (i)} \\ v_{c (i)} \ev \\
    & = \bv v_{s (i)} \\ v_{c (i)} \ev\T \bv v_{s (i)} \\ v_{c (i)} \ev = v_{s (i)}^2 + v_{c (i)}^2. \qedhere
\end{align*}
\end{proof}

\begin{property}\label{prop:periodic:dynamics} 
Given the system $x_{k+1}=Ax_k+Bu_k$, suppose that
\begin{align*}
    & u_{N+\ell} = \ue + \us \sin(w \ell) + \uc \cos(w \ell), \; \forall  \ell \geq 0 \\
    & x_N = \xe + \xc \\
    & \xe = A \xe + B \ue \\
    & \xs \cos(w) - \xc \sin(w) =  A \xs + B \us \\
    & \xs \sin(w) + \xc \cos(w) =  A \xc + B \uc.
\end{align*}
Then
\begin{equation*}
x_{N+\ell} = \xe + \xs \sin(w\ell) + \xc \cos(w\ell), \; \forall \ell \geq 0.
\end{equation*}
\end{property}

\begin{proof} 
Since $x_N = \xe + \xc$, the claim is trivially satisfied for $\ell=0$. Suppose now that
the claim is satisfied for $\ell \geq 0$, we will show that it is also satisfied for $\ell +1$. Indeed,
\begin{align*}
    x_{N+\ell+1} &= A x_{N+\ell} + Bu_{N+\ell} \\
                 &=  A \left[ \xe + \xs \sin(w\ell) + \xc \cos(w\ell) \right] \\
                 &\quad + B \left[ \ue + \us \sin(w\ell) + \uc \cos(w\ell) \right] \\
                 &=  A \xe + B \ue + (A \xs + B \us) \sin(w\ell) \\
                 &\quad + (A \xc + B \uc) \cos(w\ell) \\
                 &= \xe + \left[ \xs \cos(w) - \xc \sin(w) \right] \sin(w\ell) \\
                 &\quad + \left[ \xs \sin(w) + \xc \cos(w) \right] \cos(w\ell) \\
                 &\numeq{*} \xe + \xs \sin(w(\ell+1)) + \xc \cos(w(\ell+1)).
\end{align*}
We note that equality $(*)$ is due to Property \ref{prop:simple:armonics}.
\end{proof}

\begin{property} \label{prop:bounds} 
    Let the elements $v_\ell \inR{n_v}$ of a sequence $\{v\}$ be given by
$$v_\ell = v_e + v_s \sin(w\ell) + v_c \cos(w\ell), \; \forall \ell \in \N,$$
where $w\in\R$ and $v_e, v_s, v_c\inR{n_v}$.
Then, for every $\ell \in\N$ and $i\in\N_1^{n_v}$, we have,
\begin{subequations}
\begin{align}
    v_{\ell (i)} &\leq v_{e (i)} + \sqrt{ v_{s (i)}^2 + v_{c (i)}^2}, \label{eq:bounds:upper} \\
    v_{\ell (i)} &\geq v_{e (i)} - \sqrt{ v_{s (i)}^2 + v_{c (i)}^2}. \label{eq:bounds:lower}
\end{align}
\end{subequations}
\end{property}

\begin{proof} 
    We prove inequality \eqref{eq:bounds:upper}. The proof for \eqref{eq:bounds:lower} is similar.
\begin{align*}
    v_{\ell (i)} & = v_{e (i)} + v_{s (i)} \sin(w\ell) + v_{c (i)} \cos(w\ell) \\
                 & = v_{e (i)} + \bmat{cc} v_{s (i)} & v_{c (i)} \emat \bv \sin(w\ell) \\ \cos(w\ell) \ev\\
                 & \leq v_{e (i)} + \left\| \bv v_{s (i)} \\ v_{c (i)} \ev \right\| \; \left\| \bv \sin(w\ell) \\ \cos(w\ell) \ev\right\| \\
                 & = v_{e (i)} + \sqrt{v_{s (i)}^2 + v_{c (i)}^2}. \qedhere
\end{align*}
\end{proof}

\subsection{Proof of the recursive feasibility of HMPC} \label{app:proof:recursive:feasibility}

\begin{proof}[Proof of Theorem \ref{theo:Recursive:Feasibility}] 
We begin by proving the first claim. Since $u_j=\bar{u}_j$ for $j\in\N_0^{N-1}$ and $x_0 = x$, we obtain by a direct inspection of \eqref{eq:HMPC:cond:inic} and \eqref{eq:HMPC:dynamics} that
\begin{equation} \label{eq:x:bar:x}
    x_j =\bx_j , \; j\in\N_0^N.
\end{equation}
This implies
\begin{equation*}
    C x_j + D u_j = C \bx_j + D \bu_j, \; j\in\N_0^{N-1}.
\end{equation*}
Therefore, we have from inequality \eqref{ineq:HMPC:z:first} that
\begin{equation}\label{ineq:z:hasta:N:minus}
    \zLB \leq C x_j + D u_j \leq \zUB, \; j\in\N_0^{N-1}.
\end{equation}
We now prove that these inequalities also hold for $j \geq N$.
From \eqref{eq:def:u:j} we have that
\begin{equation*}
    u_{N+\ell} = \ue + \us \sin(w\ell) + \uc \cos(w\ell), \; \ell \geq 0.
\end{equation*}
From \eqref{eq:x:bar:x} and \eqref{eq:HMPC:xN} we also have that $x_N = \bx_N = \xe + \xc$.
Taking also into consideration equalities \eqref{eq:HMPC:xe} to \eqref{eq:HMPC:xc} we obtain
\begin{align*}
    & u_{N+\ell} = \ue + \us \sin(w\ell) + \uc \cos(w\ell), \; \ell \geq 0 \\
    & x_N = \xe + \xc \\
    & \xe =A \xe + B \ue  \\
    & \xs \cos(w) - \xc \sin(w) = A \xs + B \us \\
    & \xs \sin(w) + \xc \cos(w) = A \xc + B \uc.
\end{align*}
which along with Property \ref{prop:periodic:dynamics}, allows us to write
\begin{equation*}
    x_{N+\ell} = \xe + \xs \sin(w\ell) + \xc \cos(w\ell), \; \forall \ell \geq 0.
\end{equation*}
Therefore, we have that
\begin{align*}
    z_{N+\ell} &= Cx_{N+\ell} + Du_{N+\ell} \\
               & = C \xe + D \ue + ( C \xs + D \us )\sin(w\ell) \\
               &\quad + ( C \xc + D \uc )\cos(w\ell)\\
               & = \ze + \zs \sin(w\ell) + \zc \cos(w\ell),
\end{align*}
where the last equality is simply due to the definitions of $\ze$, $\zs$ and $\zc$ \eqref{eq:def:z}.
From this expression of $z_{N+\ell}$ and Property \ref{prop:bounds} we deduce that for every  $\ell \geq 0$ and $i\in\N_1^{n_z}$,
\begin{align*} \label{ineq:z:in:abstract}
    z_{N+\ell, (i)} &\leq \zej + \sqrt{ \zsj^2 + \zcj^2}, \\
    z_{N+\ell, (i)} &\geq \zej - \sqrt{ \zsj^2 + \zcj^2}.
\end{align*}
From this, alongside inequalities \eqref{ineq:HMPC:z:minus} and \eqref{ineq:HMPC:z:plus}, we obtain that
\begin{equation*}
    \hzLBj \leq z_{N+\ell,(i)} \leq \hzUBj, \; \forall i\in\N_1^{n_z}, \; \forall \ell\geq 0.
\end{equation*}
Since by construction ${\zLBj \leq \hzLBj}$ and ${\hzUBj \leq \zUBj}$ (see Remark \ref{rem:Admissible}), we have that,
\begin{equation}\label{ineq:z:bounds}
    \zLB \leq C x_{N+\ell} + D u_{N+\ell} \leq \zUB, \; \forall \ell\geq 0.
\end{equation}
Which along with (\ref{ineq:z:hasta:N:minus}), proves (\ref{ineq:z:totales}).

We now prove the second claim, i.e. $A x + B \bu_0$ belongs to the feasibility region of $\lH(A x + B \bu_0; x_r, u_r)$. To do so, we show that
\begin{subequations} \label{eq:Shift:all}
\begin{align}
    & \bu_j^+ \doteq \bu_{j+1}, \; j\in\N_0^{N-2} \label{eq:Shift:u:shift} \\
    & \bu_{N-1}^+ \doteq \ue + \uc \label{eq:Shift:uN} \\
    & \bx_0^+ \doteq A x +B \bu_0 \label{eq:Shift:x:inic} \\
    & \bx_{j+1}^+ \doteq A \bx_j^+ +B \bu_j^+, \;j\in\N_0^{N-1} \label{eq:Shift:x:shift} \\
    & \ue^+ \doteq \ue \label{eq:Shift:ue} \\
    & \us^+ \doteq \us \cos(w) - \uc \sin(w) \\
    & \uc^+ \doteq \us \sin(w) + \uc \cos(w) \\
    & \bmat{ccc} \xe^+ & \xs^+ & \xc^+ \emat \doteq \bmat{cc} A & B \emat \bmat{ccc} \xe & \xs & \xc \\ \ue & \us & \uc \emat \label{eq:Shift:x_h}
\end{align}
\end{subequations}
is a feasible solution for the initial condition $A x + B \bu_0$ by showing that \eqref{eq:Shift:all} satisfies constraints \eqref{eq:HMPC:dynamics} to \eqref{ineq:HMPC:z:plus}. That is, we prove in what follows that
\begin{subequations}
\begin{align}
    & \bx_{j+1}^+ = A \bx_j^+ + B\bu_j^+,\; j\in\N_0^{N-1} \label{eq:Feas:dynamics}\\
    & \zLB \leq C \bx_j^+ + D \bu_j^+ \leq \zUB,\; j\in\N_0^{N-1} \label{ineq:Feas:z:first}\\
    & \bx^+_0 = A x + B \bu_0 \label{eq:Feas:cond:inic}\\
    & \bx_N^+ = \xe^+ + \xc^+ \label{eq:Feas:xN}\\
    & \xe^+ = A \xe^+ + B \ue^+ \label{eq:Feas:xe} \\
    & \xs^+ \cos(w) - \xc^+ \sin(w) = A \xs^+ + B \us^+ \label{eq:Feas:xs}\\
    & \xs^+ \sin(w) + \xc^+ \cos(w) = A \xc^+ + B \uc^+ \label{eq:Feas:xc}\\ 
    & \sqrt{ (\zsj^+)^2 + (\zcj^+)^2 } \leq \zej^+ - \hzLBj,\; i\in\N_1^{n_z} \label{ineq:Feas:z:minus}\\
    & \sqrt{ (\zsj^+)^2 + (\zcj^+)^2 } \leq \hzUBj - \zej^+,\; i\in\N_1^{n_z}, \label{ineq:Feas:z:plus}
\end{align}
\end{subequations}
where variables $\ze^+$, $\zs^+$ and $\zc^+$ are given by
\begin{equation*}
\bmat{ccc} \ze^+ & \zs^+ & \zc^+ \emat \doteq \bmat{cc} C & D \emat \bmat{ccc} \xe^+ & \xs^+ & \xc^+ \\ \ue^+ & \us^+ & \uc^+ \emat.
\end{equation*}

Equalities \eqref{eq:Feas:dynamics} and \eqref{eq:Feas:cond:inic} are trivially satisfied by construction (see \eqref{eq:Shift:x:inic}-\eqref{eq:Shift:x:shift}). Since $\bx_{0}^+ = A x +B \bu_0 = \bx_1$, and $\bu_j^+ = \bu_{j+1}$, $j\in\N_0^{N-2}$ (see (\ref{eq:Shift:u:shift})), we have
\begin{equation} \label{eq:equivalence:step:bxj_buj}
    \bv \bx^+_j \\ \bu_j^+ \ev = \bv \bx_{j+1} \\ \bu_{j+1} \ev, \;j\in\N_0^{N-2}.
\end{equation}
Therefore, from \eqref{ineq:HMPC:z:first} we obtain
\begin{equation}\label{ineq:z:solo:Nminusdos}
    \zLB \leq C \bx_j^+ + D \bu_j^+ \leq \zUB, \;j\in\N_0^{N-2}.
\end{equation}
We now compute the value of $\bx^+_{N-1}$.
\begin{align} \label{eq:equivalence:bxN}
    \bx_{N-1}^+ &= A \bx_{N-2}^+ + B \bu_{N-2}^+ \nonumber \\
                &= A \bx_{N-1} + B \bu_{N-1} = \bx_N = \xe + \xc.
\end{align}
Since $\bu_{N-1}^+ = \ue + \uc$ we obtain
\begin{align*}
    z_{N-1}^+ &= C \bx^+_{N-1} + D \bu^+_{N-1} \\
              &= C (\xe + \xc) + D (\ue + \uc) = \ze + \zc.
\end{align*}

Defining
\begin{equation*}
    z_{N+\ell} = \ze + \zs \sin(w\ell) + \zc \cos(w\ell), \; \forall \ell \in \N,
\end{equation*}
we have $z_{N-1}^+ = z_{N}$. This, along with \eqref{ineq:z:bounds}, yields
\begin{equation*}
    \zLB \leq C\bx_{N-1}^++D\bu_{N-1}^+ \leq \zUB.
\end{equation*}
From this and \eqref{ineq:z:solo:Nminusdos}, we conclude
\begin{equation*}
    \zLB  \leq C\bx_j^++D\bu_j^+ \leq \zUB, \;j\in\N_0^{N-1},
\end{equation*}
which proves \eqref{ineq:Feas:z:first}.
The value of $\bx_N^+$ can be computed from $\bx_{N-1}^+ = \bx_N$ and $\bu_{N-1}^+ = \ue + \uc$ as follows.
\begin{align*}
    \bx_N^+  &= A \bx_{N-1}^+ + B \bu_{N-1}^+ = A \bx_N + B(\ue + \uc) \\
             &= A(\xe + \xc)+ B(\ue + \uc) = \xe^+ + \xc^+,
\end{align*}
which proves (\ref{eq:Feas:xN}).
From
\begin{equation*}
    \xe^+  = A \xe + B \ue = \xe,
\end{equation*}
and equality $\ue^+ = \ue$ (see \eqref{eq:Shift:ue}) we obtain from \eqref{eq:Shift:x_h}
\begin{equation*}
    \xe^+ = A \xe + B \ue = A \xe^+ + B \ue^+,
\end{equation*}
which proves \eqref{eq:Feas:xe}.
We now prove \eqref{eq:Feas:xs}.
\small
\begin{align*}
    A \xs^+ + B\us^+ &= A( A \xs + B\us) + B \us^+ \\
                               &=  A(\xs \cos(w) - \xc \sin(w)) \\
                               &\quad +  B(\us \cos(w) - \uc \sin(w)) \\
                               &=  (A \xs + B \us) \cos(w) - (A \xc + B \uc) \sin(w) \\
                               &=  \xs^+ \cos(w) - \xc^+ \sin(w).
\end{align*}
\normalsize
We prove \eqref{eq:Feas:xc} in a similar way.
\small
\begin{align*}
    A \xc^+ + B \uc^+ &= A( A \xc + B\uc) + B \uc^+ \\
                               &=  A(\xs \sin(w) + \xc \cos(w)) \\
                               &\quad +  B(\us \sin(w) + \uc \cos(w)) \\
                               &=  (A \xs + B \us) \sin(w) + (A \xc + B \uc) \cos(w) \\
                               &=  \xs^+ \sin(w) + \xc^+ \cos(w).
\end{align*}
\normalsize
Next, we express $\ze^+$, $\zs^+$ and $\zc^+$ in terms of $\ze$, $\zs$, $\zc$.
\small
\begin{align*}
\ze^+ &= C \xe^+ + D \ue^+ = C \xe + D\ue = \ze.\\
\zs^+ &= C \xs^+ + D \us^+ = C( A \xs + B \us ) + D \us^+ \\
           &= C( \xs \cos(w) - \xc \sin(w)) + D( \us \cos(w) - \uc \sin(w)) \\
           &= (C \xs + D \us) \cos(w) - (C \xc + D \uc) \sin(w)\\
           &= \zs \cos(w) - \zc \sin(w).\\
\zc^+ &= C \xc^+ + D \uc^+ = C( A \xc + B \uc ) + D \uc^+ \\
           &= C( \xs \sin(w) + \xc \cos(w)) + D( \us \sin(w) + \uc \cos(w)) \\
           &= (C \xs + D \us) \sin(w) + (C \xc + D \uc) \cos(w)\\
           &= \zs \sin(w) + \zc \cos(w).
\end{align*}
\normalsize
Therefore, for every $i\in\N_1^{n_z}$ we have
\begin{align*}
    &\zej^+ = \zej \\
    &\zsj^+ = \zsj \cos(w) - \zcj \sin(w) \\
    &\zcj^+ = \zsj \sin(w) + \zcj \cos(w).
\end{align*}
In view of Property \ref{prop:simple:armonics} this leads to
\begin{equation*}
    \sqrt{(\zsj^+)^2 + (\zcj^+)^2} = \sqrt{\zsj^2 + \zcj^2}, \; i\in\N_1^{n_z}.
\end{equation*}
From this we conclude that inequalities \eqref{ineq:Feas:z:minus} and \eqref{ineq:Feas:z:plus} are directly inferred from \eqref{ineq:HMPC:z:minus} and \eqref{ineq:HMPC:z:plus}.
\end{proof}

\subsection{Proof of the asymptotic stability of the HMPC controller} \label{app:proof:stability:HMPC}

The proof of Theorem \ref{theo:HMPC:Stability} relies on the following lemma, whose proof, as well as the proof of the theorem, are heavily influenced by the proofs of Theorem 1 and Lemma~1 from~\cite{Limon_TAC_2018}. However, in our case, we directly prove the existence of a function that satisfies the Lyapunov asymptotic stability conditions of Theorem \ref{theo:Lyapunov:Stability}.

\begin{lemma} \label{lemma:stability:x:is:optimal} 
    Consider a system \eqref{eq:Model} subject to \eqref{eq:Constraints} and assume that $N$ is greater or equal to its controllability index. Let $(x_r, u_r)$ be a given reference and $x$ be a state belonging to the feasibility region of the HMPC controller $\lH(x; x_r,u_r)$. Then, $x = \xhcero^*$ if and only if $x = \xeo$, where $\xhcero^*$ is given by \eqref{eq:optimal:harmonic:signals:x} and $\xeo$ by Lemma \ref{lemma:optimal:artificial:reference:HMPC}.
\end{lemma}

\begin{proof} 
Due to space considerations, we will drop the dependency w.r.t. $(x_r, u_r)$ from the notation of the functions. 
First, we show the implication $x = \xhcero^* {\rightarrow} x = \xeo$.~Assume that $x = \xhcero^*$ and let ${\offsetCostH^* \doteq \offsetCostH(\xH^*, \uH^*)}$ and ${\offsetCostH^\circ \doteq \offsetCostH(\xHo, \uHo)}$. Then, it can be shown that
$\lH^*(x; x_r, u_r) = \offsetCostH(\xH^*, \uH^*)$, i.e., that the optimal solution of $\lH(x; x_r, u_r)$ is given by
\begin{equation} \label{eq:lemma:stability:x:is:optimal:solution}
    x_j^* = \xhj^*, \quad u_j^* = \uhj^*, \quad \forall j\in\N_0^{N-1},
\end{equation}
where $\xhj^*$ and $\uhj^*$ are given by \eqref{eq:optimal:harmonic:signals}.

Indeed, the stage cost of \eqref{eq:lemma:stability:x:is:optimal:solution} is $\stageCostH(\vv{x}^*, \vv{u}^*, \xH^*, \uH^*) = 0$, which is its smallest possible value for all solutions in which $\xhcero^* = x$. Additionally, it can be shown that \eqref{eq:lemma:stability:x:is:optimal:solution} is a feasible solution of \eqref{eq:HMPC:dynamics}-\eqref{ineq:HMPC:z:plus}.

Next, we prove that $\offsetCostH^* = \offsetCostH^\circ$ by contradiction. Assume that $\offsetCostH^* > \offsetCostH^\circ$. Since $(\xHo, \uHo)$ is the unique minimizer of $\offsetCostH(\cdot)$ for all $(\xH, \uH)$ that satisfy \eqref{eq:HMPC:xe}-\eqref{ineq:HMPC:z:plus}, this implies that $(\xH^*, \uH^*) \neq (\xHo, \uHo)$.

Let $\hat{\vv{x}}_H$ be defined as
\begin{align*}
    \hat{\vv{x}}_H &= (\hat{x}_e, \hat{x}_s, \hat{x}_c) = \lambda \xH^* + (1 - \lambda) \xH^\circ \\
                   &= \lambda (\xe^*, \xs^*, \xc^*) + (1-\lambda) (\xeo, 0, 0), \; \lambda \in [0, 1],
\end{align*}
and $\hat{\vv{u}}_H$ similarly.
Then, since $N$ is greater or equal to the controllability index of the system, $\cc{Z}$ is convex, and $(\xhj^*, \uhj^*) \in \ri{\cc{Z}}$ for all $j\in\N$ (as can be deduced from Property \ref{prop:bounds} and the fact that $(\xH^*, \uH^*)$ satisfies \eqref{ineq:HMPC:z:minus} and \eqref{ineq:HMPC:z:plus}), there exists a $\hat{\lambda} \in [0, 1)$ such that for any $\lambda \in [\hat{\lambda}, 1]$ there is a dead-beat control law $\vv{u}^\text{db}$ for which the predicted trajectory $\vv{x}^\text{db}$ satisfying $x^\text{db}_0 = \xhcero^*$ and $x^\text{db}_N = \hat{x}_{h 0}$ is a feasible solution $(\vv{x}^\text{db}, \vv{u}^\text{db}, \hat{\vv{x}}_H, \hat{\vv{u}}_H)$ of problem $\lH(\xhcero^*; x_r, u_r)$.

Then, taking into account the optimality of \eqref{eq:lemma:stability:x:is:optimal:solution}, and noting that there exists a matrix $P\in\Sp{n}$ such that
$$\Sum{j=0}{N-1} \| x_j^\text{db} - \hat{x}_{h j} \|^2_Q + \| u_j^\text{db} - \hat{u}_{h j} \|^2_R \leq \|x_0^\text{db} - \hat{x}_{h 0} \|^2_P,$$
we have that
\begin{align} \label{eq:lemma:stability:offsetCostH}
    \offsetCostH^* &= H(\vv{x}^*, \vv{u}^*, \xH^*, \uH^*) \leq H(\vv{x}^\text{db}, \vv{u}^\text{db}, \hat{\vv{x}}_H, \hat{\vv{u}}_H) \nonumber \\
                               &= \stageCostH(\vv{x}^\text{db}, \vv{u}^\text{db}, \hat{\vv{x}}_H, \hat{\vv{u}}_H) + \offsetCostH(\hat{\vv{x}}_H, \hat{\vv{u}}_H) \nonumber \\
                               &\leq \| \xhcero^* - \hat{x}_{h0} \|^2_P + \offsetCostH(\hat{\vv{x}}_H, \hat{\vv{u}}_H) \nonumber \\
                               &\numeq{*} (1 - \lambda)^2 \| \xhcero^* - \xeo \|^2_P + \offsetCostH(\hat{\vv{x}}_H, \hat{\vv{u}}_H),
\end{align}
where step $(*)$ is using
\begin{align*}
    \xhcero^* - \hat{x}_{h0} &= \xhcero^* -\left[ \lambda\xhcero^* + (1-\lambda) \xhcero^\circ \right] \\
                             &= (1-\lambda) (\xhcero^* - \xhcero^\circ) = (1-\lambda) (\xhcero^* - \xeo).
\end{align*}
From the convexity of $\offsetCostH(\cdot)$ we have that
\begin{equation*}
    \offsetCostH(\hat{\vv{x}}_H, \hat{\vv{u}}_H) \leq \lambda \offsetCostH^* + (1-\lambda) \offsetCostH^\circ, \; \lambda \in [0, 1],
\end{equation*}
which combined with \eqref{eq:lemma:stability:offsetCostH} leads to,
\begin{equation} \label{eq:lemma:stability:x:is:optimal:upper:bound}
    \offsetCostH^* \leq \theta(\lambda), \, \lambda \in [\hat\lambda, 1],
\end{equation}
where
\begin{equation*}
    \theta(\lambda) \doteq (1 - \lambda)^2 \| \xhcero^* - \xeo \|^2_P + \lambda (\offsetCostH^* - \offsetCostH^\circ) + \offsetCostH^\circ.
\end{equation*}
The derivative of $\theta(\lambda)$ (w.r.t. $\lambda$) is
\begin{equation*}
    \theta'(\lambda) = -2 (1 - \lambda) \| \xhcero^* - \xeo ||^2_P + (\offsetCostH^* - \offsetCostH^\circ).
\end{equation*}
Taking into account the initial assumption $\offsetCostH^* - \offsetCostH^\circ > 0$, we have that $\theta'(1) > 0$. Therefore, there exists a $\lambda \in [\hat\lambda, 1)$ such that $\theta(\lambda) < \theta(1) = \offsetCostH^*$, which together with \eqref{eq:lemma:stability:x:is:optimal:upper:bound} leads to the contradiction $\offsetCostH^* < \offsetCostH^*$.

Therefore, we have that $\offsetCostH(\xH^*, \uH^*) \leq \offsetCostH(\xHo, \uHo)$. Moreover, since $(\xHo, \uHo)$ is the unique minimizer of $\offsetCostH(\xH, \uH)$ for all $(\xH, \uH)$ that satisfy \eqref{eq:HMPC:xe}-\eqref{ineq:HMPC:z:plus}, we conclude that ${\xhcero^* = \xeo}$.

The reverse implication is straightforward. Assume now that ${x = \xeo}$. Then,
\begin{equation} \label{eq:lemma:stability:x:is:optimal:solution:reverse}
    \xH^* = \xHo, \, \uH^* = \uHo, \, x_j^* = \xeo, \, u_j^* = \ueo, \, \forall j\in\N_0^{N-1}
\end{equation}
is a feasible solution of $\lH(x; x_r, u_r)$, since $(\xHo, \uHo)$ satisfies \eqref{eq:HMPC:xe}-\eqref{ineq:HMPC:z:plus} and $(\xeo, \ueo)$ is a steady state of the system. Moreover, \eqref{eq:lemma:stability:x:is:optimal:solution:reverse} is the optimal solution of $\lH(x; x_r, u_r)$. Indeed, note that $\stageCostH(\vv{x}^*, \vv{u}^*, \xH^*, \uH^*) = 0$ and that $\offsetCostH(\xH^*, \uH^*) = \offsetCostH(\xHo, \uHo)$, which is, once again, its minimum value for all $(\xH, \uH)$ satisfying \eqref{eq:HMPC:xe}-\eqref{ineq:HMPC:z:plus}. Therefore, due to the strict convexity of $\offsetCostH(\cdot)$, we conclude that $\xH^* = \xHo$, implying ${\xhcero^* = \xeo}$. \qedhere
\end{proof}

\begin{proof}[Proof of Theorem \ref{theo:HMPC:Stability}] 

The proof is based on finding a function that satisfies the Lyapunov conditions for asymptotic stability given in Theorem \ref{theo:Lyapunov:Stability}.

Let us consider a state $x$ belonging to the domain of attraction of the HMPC controller and a reference $(x_r, u_r)$. Let $\vv{x}^*$, $\vv{u}^*$, $\xH^*$, $\uH^*$ be the optimal solution of $\lH(x; x_r, u_r)$, $\lH^*(x; x_r, u_r) \doteq \costFunH(\vv{x}^*, \vv{u}^*, \xH^*, \uH^*)$ be its optimal value, $\offsetCostH^* \doteq \offsetCostH(\xH^*, \uH^*; x_r, u_r)$ and $\offsetCostH^\circ \doteq \offsetCostH(\xHo, \uHo; x_r, u_r)$.

We will now show that the function
\begin{equation*}
    W(x; x_r, u_r) = \lH^*(x; x_r, u_r) - \offsetCostH^\circ(x_r, u_r)
\end{equation*}
is a Lyapunov function for $x - \xeo$ by finding suitable $\alpha_1(\|x - \xeo\|)$ and $\alpha_2(\|x - \xeo\|)$ $\cc{K}_\infty$-class functions such that the conditions of Theorem \ref{theo:Lyapunov:Stability} are satisfied. Due to space considerations, in this proof we will drop the dependency w.r.t. $(x_r, u_r)$ from the notation of the functions.

Let $x^+ \doteq A x + B \bu_0^*$ be the successor state and consider the shifted sequence $\vv{x}^+$, $\vv{u}^+$, $\xH^+$, $\uH^+$ be defined as in \eqref{eq:Shift:all} but taking $\vv{x}^*$, $\vv{u}^*$, $\xH^*$, $\uH^*$ in the right-hand-side of the equations. It is clear from the proof of Theorem \ref{theo:Recursive:Feasibility} that this shifted sequence is a feasible solution of $\lH(x^+; x_r, u_r)$.

The satisfaction of condition \textit{(i)} of Theorem \ref{theo:Lyapunov:Stability}, for any state $x$ belonging to the domain of attraction of the HMPC controller, follows from
\begin{align*} \label{eq:stability:proof:condition:i}
    W(x) &= \Sum{j=0}{N-1} \| x_j^* - \xhj^* \|_Q^2 + \| u_j^* - \uhj^* \|_R^2 + \offsetCostH^* - \offsetCostH^\circ \nonumber \\
         &\numeq[\geq]{A1} \|{x_0^*}{-}{\xhcero^*} \|_Q^2 + \frac{\hat\sigma}{2} \|\xhcero^* - \xeo \|^2 \nonumber \\
         & \geq \min\{ \lambda_\text{min} (Q), \frac{\hat\sigma}{2}\} \left( \|x - \xhcero^*\|^2 + \|\xhcero^* - \xeo \|^2 \right) \nonumber \\
         &\numeq[\geq]{A2} \frac{1}{2} \min\{ \lambda_\text{min} (Q), \frac{\hat\sigma}{2} \} \| x - \xeo \|^2,
\end{align*}
where $(A2)$ is due to the parallelogram law, which states that for any two vectors $v_1, v_2 \inR{n_v}$,
\begin{equation*}
    \|v_1\|^2 + \|v_2\|^2 = \frac{1}{2}\|v_1 + v_2\|^2 + \frac{1}{2}\|v_1 - v_2\|^2,
\end{equation*}
and $(A1)$ follows from the fact that
\begin{equation} \label{eq:strong:convexity:condition:offsetCostH}
    \offsetCostH^* - \offsetCostH^\circ \geq \frac{\hat\sigma}{2} \|\xhcero^* - \xeo\|^2
\end{equation}
for some $\hat\sigma > 0$. To show this, note that $\offsetCostH(\cdot)$ is a strongly convex function. Therefore, it satisfies for some $\sigma > 0$ \cite[Theorem 5.24]{Beck_SIAM_17}, \cite[\S 9.1.2]{Boyd_ConvexOptimization},
\begin{equation*}
    \offsetCostH(z) - \offsetCostH(y) \geq \sp{\nabla \offsetCostH(y)}{z - y} + \frac{\sigma}{2} \| z - y \|^2,
\end{equation*}
for all $z, y \in \R^n \times \R^n \times \R^n \times \R^m \times \R^m \times \R^m$. Particularizing for $z = (\xH^*, \uH^*)$ and $y = (\xHo, \uHo)$ we have,
\begin{align*}
    \offsetCostH^* - \offsetCostH^\circ &\geq \sp{\nabla \offsetCostH^\circ}{(\xH^*, \uH^*) - (\xHo, \uHo)} \\
                                        &\quad+ \frac{\sigma}{2} \| (\xH^*, \uH^*) - (\xHo, \uHo) \|^2.
\end{align*}
From the optimality of $(\xHo, \uHo)$ we have that \cite[Proposition 5.4.7]{Bertsekas_Convex_2009}, \cite[\S 4.2.3]{Boyd_ConvexOptimization},
\begin{align*}
    \sp{\nabla \offsetCostH^\circ}{(\xH, \uH) - (\xHo, \uHo)} \geq 0
\end{align*}
for all $(\xH, \uH)$ satisfying \eqref{eq:HMPC:xe}-\eqref{ineq:HMPC:z:plus}. Since $(\xH^*, \uH^*)$ satisfies \eqref{eq:HMPC:xe}-\eqref{ineq:HMPC:z:plus}, this leads to
\begin{align*}
    &\offsetCostH^* - \offsetCostH^\circ \geq \frac{\sigma}{2} \| (\xH^*, \uH^*) - (\xHo, \uHo) \|^2 \\
                                        &= \frac{\sigma}{2} \left( \| \xH^* - \xHo \|^2 + \| \uH^* - \uHo \|^2 \right) \\
                                        &\geq \frac{\sigma}{2} ( \|{\xe^*} - {\xeo} \|^2 + \| \xs^* \|^2 + \| \xc^* \|^2 ) \\
                                        &\geq \frac{\sigma}{2} ( \|{\xe^*} - {\xeo} \|^2 + \| \xs^* \sin(- w N) \|^2 {+} \| \xc^* \cos(- w N)\|^2 ).
\end{align*}
Finally, making use of the parallelogram law, inequality \eqref{eq:strong:convexity:condition:offsetCostH} follows from the fact that there exists a scalar $\hat\sigma > 0$ such that
\begin{align*}
&\frac{\sigma}{2} ( \|{\xe^*} - {\xeo} \|^2 + \| \xs^* \sin(- w N) \|^2 + \| \xc^* \cos(- w N)\|^2 ) \\
&\geq \frac{\hat\sigma}{2} \|\xe^* - \xeo + \xs^* \sin(- w N) + \xc^* \cos(- w N) \|^2 \\
&= \frac{\hat\sigma}{2} \| \xhcero^* - \xeo \|^2.
\end{align*}

Since $(\xeo, \ueo) \in \ri{\cc{Z}}$ (see Lemma \ref{lemma:optimal:artificial:reference:HMPC}), the system is controllable and $N$ is greater than its controllability index, there exists a sufficiently small compact set containing the origin in its interior $\Omega$ such that, for all states $x$ that satisfy $x - \xeo \in \Omega$, the dead-beat control law
\begin{equation*} \label{eq:dead:beat:control:law}
    u_j^\text{db} = K_\text{db} (x_j^\text{db} - \xeo) + \ueo 
\end{equation*}
provides an admissible predicted trajectory $\vv{x}^\text{db}$ of system \eqref{eq:Model} subject to \eqref{eq:Constraints}, where $x^\text{db}_{j+1} = A x^\text{db}_j + B u^\text{db}_j,\; j \in\N_0^{N-1}$, ${x^\text{db}_0 = x}$ and ${x^\text{db}_N = \xeo}$.

Then, taking into account the optimality of $\vv{x}^*$, $\vv{u}^*$, $\xH^*$, $\uH^*$, we have that,
\begin{align*}
    W(x) &= \stageCostH(\vv{x}^*, \vv{u}^*, \xH^*, \uH^*) + \offsetCostH(\xH^*, \uH^*) - \offsetCostH^\circ \\
         &\leq \stageCostH(\vv{x}^\text{db}, \vv{u}^\text{db}, \xHo, \uHo) + \offsetCostH(\xHo, \uHo) - \offsetCostH^\circ \\
         &= \Sum{j=0}{N-1} \| x_j^\text{db} - \xeo \|^2_Q + \| u_j^\text{db} - \ueo \|^2_R.
\end{align*}
Therefore, there exists a matrix $P\in\Sp{n}$ such that
\begin{align*}
    W(x) \leq \lambda_\text{max}(P) \|x - \xeo\|^2
\end{align*}
for any $x - \xeo \in\Omega$. This shows the satisfaction of condition \textit{(ii)} of Theorem \ref{theo:Lyapunov:Stability}.

Next, let $\Delta W(x) \doteq W(x^+) - W(x)$ and note that, as shown by \eqref{eq:Shift:uN}, \eqref{eq:Shift:x_h}, \eqref{eq:equivalence:step:bxj_buj}, \eqref{eq:equivalence:bxN} and Property \ref{prop:simple:armonics}, we have that ${x_j^+ = x_{j+1}^*}$ for $j\in\N_0^{N-1}$, $u_j^+ = u_{j+1}^*$ for $j\in\N_0^{N-1}$, and that $\xhj^+ = x_{h,j+1}^*$ and $\uhj^+ = u_{h,j+1}^*$ for $j\in\N$.
Then,
\begin{align*}
    \Delta W(x) &= \lH^*(x^+) - \offsetCostH^\circ - \lH^*(x) + \offsetCostH^\circ \\
                &\leq \lH(x^+) - \lH^*(x) \\
                &= \Sum{j=0}{N-1} \| x_j^+ - \xhj^+ \|_Q^2 + \| u_j^+ - \uhj^+ \|_R^2\\
                & + \Sum{j=0}{N-1} - \| x_j^* - \xhj^* \|_Q^2 - \| u_j^* - \uhj^* \|_R^2 \\
                & + \offsetCostH(\xH^+, \uH^+) - \offsetCostH(\xH^*, \uH^*) \\
                & \numeq{*} \Sum{j=1}{N-1} \|x_j^* - \xhj^*\|_Q^2 + \|u_j^* - \uhj^*\|_R^2 \\
                & + \Sum{j=0}{N-1} - \|x_j^* - \xhj^*\|_Q^2 - \|u_j^* - \uhj^*\|_R^2 \\
                & + \| x_N^* - x_{hN}^* \|_Q^2 + \|u_N^* - u_{hN}^*\|_R^2 \\
                & = -\| x_0^* - \xhcero^* \|_Q^2 - \| u_0^* - \uhcero^* \|_R^2 \\
                & \leq - \lambda_\text{min}(Q) \| x - \xhcero^* \|^2,
\end{align*}
where in step $(*)$ we are making use of the fact that, 
$$\offsetCostH(\xH^+, \uH^+) = \offsetCostH(\xH^*, \uH^*).$$
Indeed, note that $\xe^+ = \xe^*$ and $\ue^+ = \ue^*$. Therefore, the first two terms of $\offsetCostH(\xH^+, \uH^+)$ \eqref{eq:HMPC:Offset:Cost} are the same as those of $\offsetCostH(\xH^*, \uH^*)$. We now show that, since $T_h$ and $S_h$ are diagonal matrices, the terms $\| \xs \|_{T_h}^2 + \| \xc \|_{T_h}^2$ are also the same (terms $\| \us \|_{S_h}^2 + \| \uc \|_{S_h}^2$ follow similarly).
\begin{align*}
\| \xs^+ \|_{T_h}^2 + \| \xc^+ \|_{T_h}^2 &= \| \xs^* \cos(w) - \xc^* \sin(w) \|_{T_h}^2 \\
                                          &+ \| \xs^* \sin(w) + \xc^* \cos(w) \|_{T_h}^2 \\
                                                &= ( \sin(w)^2 + \cos(w)^2 ) \| \xs^* \|_{T_h}^2 \\
                                                &+ ( \sin(w)^2 + \cos(w)^2 ) \| \xc^* \|_{T_h}^2 \\
                                                &+ 2 \cos(w) \sin(w) \sp{\xs^*}{T_h \xc^*} \\
                                                &- 2 \cos(w) \sin(w) \sp{\xs^*}{T_h \xc^*} \\
                                                &= \| \xs^* \|_{T_h}^2 + \| \xc^* \|_{T_h}^2.
\end{align*}

The satisfaction of condition \textit{(iii)} of Theorem \ref{theo:Lyapunov:Stability} now follows from noting that inequality
\begin{equation*}
    W(x_{k+1}) - W(x_k) \leq - \lambda_\text{min}(Q) \| x_k - \xhcero^*(x_k) \|^2
\end{equation*}
along with Lemma \ref{lemma:stability:x:is:optimal} leads to
\begin{align*}
    &W(x_{k+1}) < W(x_k), \; \forall x_k \neq \xeo, \\
    &W(x_{k+1}) = W(x_k), \; \text{if} \; x_k = \xeo. \qedhere
\end{align*}

\end{proof}

\end{appendix}

\newpage
\bibliographystyle{IEEEtran}
\bibliography{IEEEabrv,BibKrupa}

\begin{thebibliography}{10}
\providecommand{\url}[1]{#1}
\csname url@samestyle\endcsname
\providecommand{\newblock}{\relax}
\providecommand{\bibinfo}[2]{#2}
\providecommand{\BIBentrySTDinterwordspacing}{\spaceskip=0pt\relax}
\providecommand{\BIBentryALTinterwordstretchfactor}{4}
\providecommand{\BIBentryALTinterwordspacing}{\spaceskip=\fontdimen2\font plus
\BIBentryALTinterwordstretchfactor\fontdimen3\font minus
  \fontdimen4\font\relax}
\providecommand{\BIBforeignlanguage}[2]{{%
\expandafter\ifx\csname l@#1\endcsname\relax
\typeout{** WARNING: IEEEtran.bst: No hyphenation pattern has been}%
\typeout{** loaded for the language `#1'. Using the pattern for}%
\typeout{** the default language instead.}%
\else
\language=\csname l@#1\endcsname
\fi
#2}}
\providecommand{\BIBdecl}{\relax}
\BIBdecl

\bibitem{Camacho_S_2013}
E.~F. Camacho and C.~B. Alba, \emph{{M}odel {P}redictive {C}ontrol}.\hskip 1em
  plus 0.5em minus 0.4em\relax Springer Science \& Business Media, 2013.

\bibitem{Rawlings_MPC_2017}
J.~B. Rawlings, D.~Q. Mayne, and M.~Diehl, \emph{{M}odel {P}redictive
  {C}ontrol: {T}heory, {C}omputation, and {D}esign}, 2nd~ed.\hskip 1em plus
  0.5em minus 0.4em\relax Nob Hill Publishing, 2017.

\bibitem{Bemporad_explicit_2019}
A.~Bemporad, ``Explicit model predictive control,'' in \emph{Encyclopedia of
  Systems and Control}, J.~Baillieul and T.~Samad, Eds.\hskip 1em plus 0.5em
  minus 0.4em\relax London, UK: Springer, 2019, pp. 1--7.

\bibitem{Zeilinger_TAC_2011}
M.~N. Zeilinger, C.~N. Jones, and M.~Morari, ``Real-time suboptimal model
  predictive control using a combination of explicit {MPC} and online
  optimization,'' \emph{IEEE Transactions on Automatic Control}, vol.~56,
  no.~7, pp. 1524--1534, 2011.

\bibitem{Krupa_TCST_20}
P.~Krupa, D.~Limon, and T.~Alamo, ``Implementation of model predictive control
  in programmable logic controllers,'' \emph{IEEE Transactions on Control
  Systems Technology}, to be published, DOI: 10.1109/TCST.2020.2992959, 2020.

\bibitem{Huyck_MED_2012}
B.~Huyck, L.~Callebaut, F.~Logist, H.~J. Ferreau, M.~Diehl, J.~De~Brabanter,
  J.~Van~Impe, and B.~De~Moor, ``Implementation and experimental validation of
  classic {MPC} on programmable logic controllers,'' in \emph{2012 20th
  Mediterranean Conference on Control \& Automation (MED)}.\hskip 1em plus
  0.5em minus 0.4em\relax IEEE, 2012, pp. 679--684.

\bibitem{Hartley_IETCST_2014}
E.~N. Hartley, J.~L. Jerez, A.~Suardi, J.~M. Maciejowski, E.~C. Kerrigan, and
  G.~A. Constantinides, ``Predictive control using an {FPGA} with application
  to aircraft control,'' \emph{IEEE Transactions on Control Systems
  Technology}, vol.~22, no.~3, pp. 1006--1017, 2014.

\bibitem{Shukla_SD_2017}
H.~A. Shukla, B.~Khusainov, E.~C. Kerrigan, and C.~N. Jones, ``Software and
  hardware code generation for predictive control using splitting methods,''
  \emph{IFAC-PapersOnLine}, vol.~50, no.~1, pp. 14\,386--14\,391, 2017.

\bibitem{Lucia_IETII_2018}
S.~Lucia, D.~Navarro, {\'O}.~Luc{\'\i}a, P.~Zometa, and R.~Findeisen,
  ``Optimized {FPGA} implementation of model predictive control for embedded
  systems using high-level synthesis tool,'' \emph{IEEE transactions on
  industrial informatics}, vol.~14, no.~1, pp. 137--145, 2018.

\bibitem{Jerez_IETAC_2014}
J.~L. Jerez, P.~J. Goulart, S.~Richter, G.~A. Constantinides, E.~C. Kerrigan,
  and M.~Morari, ``Embedded online optimization for model predictive control at
  megahertz rates,'' \emph{IEEE Transactions on Automatic Control}, vol.~59,
  no.~12, pp. 3238--3251, 2014.

\bibitem{Wang_TCST_2010}
Y.~Wang and S.~Boyd, ``Fast model predictive control using online
  optimization,'' \emph{IEEE Transactions on control systems technology},
  vol.~18, no.~2, pp. 267--278, 2009.

\bibitem{Limon_A_2008}
D.~Limon, I.~Alvarado, T.~Alamo, and E.~F. Camacho, ``{MPC} for tracking
  piecewise constant references for constrained linear systems,''
  \emph{Automatica}, vol.~44, no.~9, pp. 2382--2387, 2008.

\bibitem{Ferramosca_A_2009}
A.~Ferramosca, D.~Limon, I.~Alvarado, T.~Alamo, and E.~Camacho, ``{MPC} for
  tracking with optimal closed-loop performance,'' \emph{Automatica}, vol.~45,
  no.~8, pp. 1975--1978, 2009.

\bibitem{Limon_TAC_2018}
D.~Limon, A.~Ferramosca, I.~Alvarado, and T.~Alamo, ``Nonlinear {MPC} for
  tracking piece-wise constant reference signals,'' \emph{IEEE Transactions on
  Automatic Control}, vol.~63, no.~11, pp. 3735--3750, 2018.

\bibitem{Krupa_CDC_19}
P.~Krupa, M.~Pereira, D.~Limon, and T.~Alamo, ``Single harmonic based model
  predictive control for tracking,'' in \emph{IEEE 58th Conference on Decision
  and Control (CDC)}.\hskip 1em plus 0.5em minus 0.4em\relax IEEE, 2019, pp.
  151--156.

\bibitem{Limon_MPCTP_2016}
D.~Limon, M.~Pereira, D.~M. de~la Pe{\~n}a, T.~Alamo, C.~N. Jones, and M.~N.
  Zeilinger, ``{MPC} for tracking periodic references,'' \emph{IEEE
  Transactions on Automatic Control}, vol.~61, no.~4, pp. 1123--1128, 2016.

\bibitem{Kohler_NMPC_18}
J.~K{\"o}hler, M.~A. M{\"u}ller, and F.~Allg{\"o}wer, ``{MPC} for nonlinear
  periodic tracking using reference generic offline computations,''
  \emph{IFAC-PapersOnLine}, vol.~51, no.~20, pp. 556--561, 2018.

\bibitem{Broomhead_A_2015}
T.~J. Broomhead, C.~Manzie, R.~C. Shekhar, and P.~Hield, ``Robust periodic
  economic {MPC} for linear systems,'' \emph{Automatica}, vol.~60, pp. 30--37,
  2015.

\bibitem{Domahidi_ECOS_2013}
A.~Domahidi, E.~Chu, and S.~Boyd, ``{ECOS}: An {SOCP} solver for embedded
  systems,'' in \emph{2013 European Control Conference (ECC)}.\hskip 1em plus
  0.5em minus 0.4em\relax IEEE, 2013, pp. 3071--3076.

\bibitem{Garstka_COSMO_ECC_2019}
M.~Garstka, M.~Cannon, and P.~Goulart, ``{COSMO}: {A} conic operator splitting
  method for large convex problems,'' in \emph{European Control Conference},
  2019, pp. 1951--1956.

\bibitem{Wang_ISA_2014}
Y.~Wang, M.~Sun, Z.~Wang, Z.~Liu, and Z.~Chen, ``A novel disturbance-observer
  based friction compensation scheme for ball and plate system,'' \emph{ISA
  transactions}, vol.~53, no.~2, pp. 671--678, 2014.

\bibitem{Stellato_OSQP}
B.~Stellato, G.~Banjac, P.~Goulart, A.~Bemporad, and S.~Boyd, ``{OSQP}: An
  operator splitting solver for quadratic programs,'' \emph{Mathematical
  Programming Computation}, vol.~12, no.~4, pp. 637--672, 2020.

\bibitem{Boyd_FTML_2011}
S.~Boyd, N.~Parikh, E.~Chu, B.~Peleato, and J.~Eckstein, ``Distributed
  optimization and statistical learning via the alternating direction method of
  multipliers,'' vol.~3, no.~1, pp. 1--122, 2011.

\bibitem{Shannon_1949}
C.~E. Shannon, ``Communication in the presence of noise,'' \emph{Proceedings of
  the IRE}, vol.~37, no.~1, pp. 10--21, 1949.

\bibitem{Beck_SIAM_17}
A.~Beck, \emph{First-order methods in optimization}.\hskip 1em plus 0.5em minus
  0.4em\relax SIAM, 2017, vol.~25.

\bibitem{Boyd_ConvexOptimization}
S.~Boyd, \emph{Convex Optimization}, 7th~ed.\hskip 1em plus 0.5em minus
  0.4em\relax Cambridge University Press, 2009.

\bibitem{Bertsekas_Convex_2009}
D.~P. Bertsekas, \emph{Convex {O}ptimization {T}heory}.\hskip 1em plus 0.5em
  minus 0.4em\relax Athena Scientific Belmont, 2009.

\end{thebibliography}

\end{document}